\providecommand{\tabularnewline}{\\}
\theoremstyle{plain}
\newtheorem{thm}{\protect\theoremname}
\journal{arXiv}
\providecommand{\theoremname}{Theorem}
\begin{document}

\begin{frontmatter}{}

\title{Consistent and Efficient Pricing of SPX and VIX Options under Multiscale
Stochastic Volatility}

\author[labela]{Jaegi Jeon}

\author[labelb]{Geonwoo Kim}

\author[labelc]{Jeonggyu Huh\corref{cor1}}

\address[labela]{Department of Mathematics, Yonsei University, Seoul 03722, Republic
of Korea}

\address[labelb]{School of Liberal Arts, Seoul National University of Science and
Technology, Seoul 01811, Republic of Korea}

\address[labelc]{School of Computational Sciences, Korea Institute for Advanced Study,
Seoul 02455, Republic of Korea}

\cortext[cor1]{corresponding author. Tel: 82-2-958-3750\\
 E-mail address: aifina2018@kias.re.kr}
\begin{abstract}
This study provides a consistent and efficient pricing method for
both Standard \& Poor's 500 Index (SPX) options and the Chicago Board
Options Exchange's Volatility Index (VIX) options under a multiscale
stochastic volatility model. To capture the multiscale volatility
of the financial market, our model adds a fast scale factor to the
well-known Heston volatility and we derive approximate analytic pricing
formulas for the options under the model. The analytic tractability
can greatly improve the efficiency of calibration compared to fitting
procedures with the finite difference method or Monte Carlo simulation.
Our experiment using options data from 2016 to 2018 shows that the
model reduces the errors on the training sets of the SPX and VIX options
by 9.9\% and 13.2\%, respectively, and decreases the errors on the
test sets of the SPX and VIX options by 13.0\% and 16.5\%, respectively,
compared to the single-scale model of Heston. The error reduction
is possible because the additional factor reflects short-term impacts
on the market, which is difficult to achieve with only one factor.
It highlights the necessity of modeling multiscale volatility.
\end{abstract}
\begin{keyword}
SPX option; VIX option; multiscale volatility; Asymptotic method 
\end{keyword}

\end{frontmatter}{}

\section{Introduction}

Volatility in the financial market is one of the most important measures
for investors. The Volatility Index (VIX) of the Chicago Board Options
Exchange (CBOE) was introduced as a measure to capture the volatility
of the Standard \& Poor's 500 Index (SPX) in 1993. The VIX was revised
to be independent of any model in 2003. After the revision, it is
calculated using SPX option prices and the SPX over the next 30 calendar-day
period. Since the introduction of the VIX, the index has become a
standard gauge of market fear, because the VIX and the SPX are negatively
correlated. In addition, VIX options also have received a lot of attention
as the popularity of the VIX has grown, and this interest in the derivatives
has led to the development of an appropriate pricing method for them.

It is noteworthy that many researchers have invented pricing methods
for VIX options under the models originally proposed to price SPX
options. This may be because an arbitrage relationship exists between
the SPX option market and the VIX option market. In the early research
period for VIX options, there were the studies to derive pricing formulas
for VIX options under several one-factor affine jump diffusion (AJD)
models; \citet{lin2009vix}, \citet{lian2013pricing}, \citet{lin2010consistent},
\citet{cheng2012remark} and \citet{cont2013consistent}. As it is
widely known, the one-factor AJD model has a special structure from
a mathematical standpoint, so they often give a pricing formula for
a derivative for SPX options as well as VIX options \citep[refer][]{duffie2000transform}.
It is highly desirable in the sense that they make it possible to
evaluate the SPX options and the VIX options without arbitrage. However,
it is recognized that the one-factor AJD models have a variety of
fundamental limitations. Above all, the models cannot accommodate
the multiscale property of volatility, which highlights the necessity
of multi-factor affine diffusion models. Jumps are usually excluded
in the models, but it is frequently stated that the models are superior
to the one-factor AJD models in literature such as \citet{kaeck2012volatility}.

In this light, \citet{gatheral2008consistent} proposed the double-mean-reverting
(DMR) Heston model to capture the multiscales of volatility. But it
does not provide any pricing formula and depends on the Monte-Carlo
simulation totally for a pricing and a calibration. Although the simulation
method has been greatly improved in \citet{bayer2013fast}, the model
still fails to suggest a viable solution to a calibration using a
large number of option prices. On the other hand, \citet{fouque2018heston}
proposed the Heston stochastic vol-of-vol model (Heston SVV model)
with fast and slow time scales to reflect skews in both the SPX data
and VIX data, and derived approximation solutions for the prices of
options on the SPX and VIX, respectively. Moreover, it showed that
the Heston SVV model calibrated jointly to real market data for the
SPX and VIX fits to the data very well. Their solutions, however,
cannot be also computed within a short period of time. Interestingly,
\citet{huh2018scaled} obtained an approximate quasi-closed formula
for VIX options under the DMR Heston model, and furthermore, it confirmed
with real market data that the calibration result utilizing the approximate
solution is analogous to the case using simulation methods. But the
approach can be applied to pricing VIX options only.

Our novel model is the first multi-factor affine diffusion model to
give practically feasible pricing formulas for both SPX options and
VIX options. To be concrete, we propose a new two-factor affine diffusion
model, and induce approximate quasi-closed formulas for both the options
based upon the perturbation theory. The analytic tractability greatly
improve the efficiency of calibration. As a result, we could achieve
satisfactory calibration results with long-term data for the SPX and
VIX from 2016 to 2018, consisting of hundreds of thousands of options.

The remainder of the paper is organized as follows. We introduce a
new two-factor model with multiscale volatility to jointly price VIX
options and SPX options in Section 2. We derive the analytic pricing
formulas of SPX options and VIX options under the proposed model using
asymptotic methods in Section 3. We present calibration with real
options data for the SPX and VIX in Section 4. We provide concluding
remarks in Section 5.

\section{A new two-factor multiscale stochastic volatility model}

Before introducing our model, we briefly review the double Heston
model of \citet{christoffersen2009shape}. The model was originally
devised to reflect stochastic correlation as well as stochastic volatility.
The improvement is achievable because the model has two factors. By
contrast, the Heston model cannot capture stochastic correlation because
it has only one factor. The double Heston model, however, cannot capture
the multiscale characteristics of a market, because it was not designed
for this purpose (see \citealp{gallant1999using,chernov2003alternative}).
Thus, the model needs to be extended to depict well-separated time
scales. More importantly, the model provides an analytic formula for
VIX options, which requires heavy computation for a double integral.
In particular, when practitioners want to fit the model to prices
for both SPX options and VIX options consistently within a limited
time, the time cost raises a severe problem.

To overcome these limitations of the double Heston model, we propose
a two-factor multiscale stochastic volatility model, which is given
by the following stochastic differential equations: 
\begin{align}
dX_{t} & =rX_{t}dt+\sqrt{Y_{t}}X_{t}dW_{t}^{X,1}+\sqrt{Z_{t}}X_{t}dW_{t}^{X,2},\nonumber \\
dY_{t} & =\frac{1}{\epsilon}\left(Z_{t}-Y_{t}\right)dt+\frac{\sqrt{2}\nu}{\sqrt{\epsilon}}\sqrt{Y_{t}}dW_{t}^{Y},\label{eq:model}\\
dZ_{t} & =\kappa\left(\theta-Z_{t}\right)dt+\sigma\sqrt{Z_{t}}dW_{t}^{Z}\nonumber 
\end{align}
for $0<\epsilon\ll1$ under a risk-neutral measure $\mathcal{Q}$,
where $r$ is the risk-free rate, and $\kappa\ll1/\epsilon$ is assumed.
The correlation structure of our model is 
\begin{gather*}
dW_{t}^{X,1}dW_{t}^{Y}=\eta dt,\quad dW_{t}^{X,2}dW_{t}^{Z}=\rho dt,\\
dW_{t}^{X,1}dW_{t}^{Z}=dW_{t}^{X,2}dW_{t}^{Y}=dW_{t}^{X,1}dW_{t}^{X,2}=dW_{t}^{Y}dW_{t}^{Z}=0.
\end{gather*}
It should be noted that $Y_{t}$ rapidly reverts to $Z_{t}$ under
our model, which is a major difference with the double Heston model.
Because the time scales are clearly separated due to condition $\kappa\ll1/\epsilon$,
the short-term variance $ourY_{t}+Z_{t}$ rapidly reverts to mid-term
variance $2Z_{t}$, and $2Z_{t}$ regresses to long-term variance
$2\theta$ at a relatively slow rate. On the other hand, one can consider
that our model is obtained by adding a fast mean-reverting factor
to Heston's volatility or otherwise by modifying the double Heston
model to express the well-separated time scales.

\section{Pricing SPX and VIX options}

\subsection{\label{subsec:An-approximate-analytic}An approximate analytic formula
for SPX options}

By the risk-neutral pricing principle, the price $P_{s}^{\epsilon}$
for an SPX call option with strike $K$ and maturity $T$ at time
$t$ is 
\begin{equation}
P_{s}^{\epsilon}\left(t,x,y,z\right)=e^{-r\left(T-t\right)}{\rm E}_{x,y,z}^{\mathcal{Q}}\left[\left(X_{T}-K\right)^{+}\right],\label{eq:spx_option_martingale_price}
\end{equation}
where $\left(X_{t},Y_{t},Z_{t}\right)=\left(x,y,z\right)$. Applying
the Feynman--Kac theorem to our model (\ref{eq:model}), it can be
shown that $P_{s}^{\epsilon}$ should satisfy the following partial
differential equation (PDE): 
\[
\left(\frac{1}{\epsilon}\mathcal{L}_{s,0}+\frac{1}{\sqrt{\epsilon}}\mathcal{L}_{s,1}+\mathcal{L}_{s,2}\right)P_{s}^{\epsilon}\left(t,x,y,z\right)=0,
\]
where 
\begin{eqnarray}
\mathcal{L}_{s,0} & = & \left(z-y\right)\partial_{y}+\nu^{2}y\partial_{yy}^{2},\nonumber \\
\mathcal{L}_{s,1} & = & \sqrt{2}\eta\nu xy\partial_{xy}^{2},\label{eq:spx_full_PDE}\\
\mathcal{L}_{s,2} & = & \partial_{t}+rx\partial_{x}+\kappa\left(\theta-z\right)\partial_{z}+\frac{1}{2}x^{2}\left(y+z\right)\partial_{xx}^{2}+\frac{1}{2}\sigma^{2}z\partial_{zz}^{2}+\rho\sigma xz\partial_{xz}^{2}-r\cdot\nonumber 
\end{eqnarray}
with the final condition $P_{s}^{\epsilon}\left(T,x,y,z\right)=\left(x-K\right)^{+}$.
Deriving an analytic solution for this PDE would be desirable; however,
we confirm that the solution cannot be easily induced.

Thus, we attempt to obtain an approximate solution for $P_{s}^{\epsilon}$
based on asymptotic methods (refer to \citet{fouque2011multiscale}).
Let us expand $P_{s}^{\epsilon}$ with respect to $\sqrt{\epsilon}$,
that is, $P_{s}^{\epsilon}=P_{s,0}+\sqrt{\epsilon}P_{s,1}+\epsilon P_{s,2}+\cdots.$
We wish to approximate $P_{s}^{\epsilon}$ to the sum of the leading
term $P_{s,0}$ and the first non-zero correction term $P_{s,1}$.
To do this, we show that $P_{s,0}$ and $P_{s,1}$ follow the following
PDEs independent of $y$: 
\begin{itemize}
\item leading term $P_{s,0}$ 
\begin{gather}
\left\langle \mathcal{L}_{s,2}\right\rangle P_{s,0}=0,\nonumber \\
P_{s,0}\left(T,x,z\right)=\left(x-K\right)^{+},\label{eq:PDE_p_s_0}
\end{gather}
\item first non-zero correction term $P_{s,1}$ 
\begin{gather}
\left\langle \mathcal{L}_{s,2}\right\rangle P_{s,1}=W_{3}^{\epsilon}zx\partial_{x}\left(x^{2}\partial_{xx}\right)P_{s,0},\nonumber \\
P_{s,1}\left(T,x,z\right)=0,\label{eq:PDE_p_s_1}
\end{gather}
\end{itemize}
where 
\begin{gather*}
\left\langle \mathcal{L}_{s,2}\right\rangle =\partial_{t}+rx\partial_{x}+\kappa\left(\theta-z\right)\partial_{z}+x^{2}z\partial_{xx}^{2}+\frac{1}{2}\sigma^{2}z\partial_{zz}^{2}+\rho\sigma xz\partial_{xz}^{2}-r,\\
W_{3}^{\epsilon}=-\frac{1}{\sqrt{2}}\eta\nu\sqrt{\epsilon}.
\end{gather*}
If setting $2z=\xi$, the operator $\left\langle \mathcal{L}_{s,2}\right\rangle $
changes to 
\[
\mathcal{L}_{H}=\partial_{t}+rx\partial_{x}+\kappa\left(2\theta-\xi\right)\partial_{\xi}+\frac{1}{2}x^{2}\xi\partial_{xx}^{2}+\frac{1}{2}\left(\sqrt{2}\sigma\right)^{2}\xi\partial_{\xi\xi}^{2}+\left(\frac{1}{\sqrt{2}}\rho\right)\left(\sqrt{2}\sigma\right)x\xi\partial_{x\xi}^{2}-r.
\]
The resulting operator $\mathcal{L}_{H}$ can be then regarded as
that generated from the Heston model whose parameters are the mean-reverting
rate $\kappa$, the long-term variance level $2\theta$, the volatility
of volatility $\sqrt{2}\sigma$, and the leverage correlation $\rho/\sqrt{2}$.

The analytic solutions for the PDEs can be obtained by exploiting
the analytical tractability of affine models fully. In fact, the key
idea for the derivation of the solutions is strongly motivated by
the work of \citet{fouque2011fast}. However, under their model, the
solutions for SPX options are involved in triple integrals requiring
overly large computational resources. By contrast, it is proved that
our solutions are all involved in single integrals. More importantly,
they give a solution only for SPX options, but we provide consistent
solutions for both SPX options and VIX options. The expressions for
$P_{s,0}$ and $P_{s,1}$ are summarized in the following theorem. 
\begin{thm}
\label{spx_theorem}The price $P_{s}^{\epsilon}$ in (\ref{eq:spx_option_martingale_price})
can be approximated with an accuracy of order $O\left(\epsilon\right)$,
that is, 
\[
\left|P_{s}^{\epsilon}-\left(P_{s,0}+P_{s,1}\right)\right|<C\epsilon
\]
for some $C>0$. Furthermore, $P_{s,0}$ and $P_{s,1}$ have the following
forms: 
\[
P_{s,0}\left(t,x,z\right)=\frac{e^{-r\tau}}{2\pi}\int e^{-ikq}\hat{G}\left(\tau,k,2z\right)\hat{h}\left(k\right)dk,
\]
\[
P_{s,1}\left(t,x,z\right)=\frac{e^{-r\tau}}{2\pi}\int e^{-ikq}b\left(k\right)\left(\kappa\theta\hat{f}_{0}\left(\tau,k\right)+2z\hat{f}_{1}\left(\tau,k\right)\right)\hat{G}\left(\tau,k,2z\right)\hat{h}\left(k\right)dk,
\]
where $\tau=T-t$, $q=r\tau+\log x$, 
\begin{align*}
\hat{G}\left(\tau,k,y\right) & =e^{C\left(\tau,k\right)+yD\left(\tau,k\right)},\\
\hat{h}\left(k\right) & =\frac{K^{1+ik}}{ik-k^{2}},\\
\hat{f}_{0}\left(\tau,k\right) & =\frac{2\tau d\left(k\right)g\left(k\right)+g^{2}\left(k\right)-1}{d^{2}\left(k\right)g\left(k\right)\left(g\left(k\right)e^{\tau d\left(k\right)}-1\right)}+\frac{\tau d\left(k\right)g\left(k\right)-g\left(k\right)-1}{d^{2}\left(k\right)g\left(k\right)},\\
\hat{f}_{1}\left(\tau,k\right) & =\frac{e^{\tau d\left(k\right)}\left(g\left(k\right)^{2}\left(e^{\tau d\left(k\right)}-1\right)-2\tau d\left(k\right)g\left(k\right)+1\right)-1}{d\left(k\right)\left(g\left(k\right)e^{\tau d\left(k\right)}-1\right)^{2}},
\end{align*}
and 
\begin{align*}
C\left(\tau,k\right) & =\frac{\kappa\theta}{\sigma^{2}}\left(\left(\kappa+\rho ik\sigma-d\left(k\right)\right)\tau-2\log\left(\frac{e^{-\tau d\left(k\right)}-g\left(k\right)}{1-e^{-\tau d\left(k\right)}g\left(k\right)}\right)\right),\\
D\left(\tau,k\right) & =\frac{\kappa+\rho ik\sigma+d\left(k\right)}{2\sigma^{2}}\left(\frac{e^{-\tau d\left(k\right)}-1}{e^{-\tau d\left(k\right)}-g\left(k\right)}\right),\\
b\left(k\right) & =-\frac{1}{2}W_{3}^{\epsilon}\left(ik^{3}+k^{2}\right),\\
d\left(k\right) & =\sqrt{2\sigma^{2}\left(k^{2}-ik\right)+\left(\kappa+\rho ik\sigma\right)^{2}},\\
g\left(k\right) & =\frac{\kappa+\rho ik\sigma+d\left(k\right)}{\kappa+\rho ik\sigma-d\left(k\right)}.
\end{align*}
\end{thm}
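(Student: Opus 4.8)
The plan is to prove the statement in three steps: (i) recall how the singular perturbation expansion pins down $P_{s,0}$ and $P_{s,1}$ through the $y$-free problems \eqref{eq:PDE_p_s_0}--\eqref{eq:PDE_p_s_1}; (ii) solve those two problems in closed form with a Fourier transform in the forward log-price, exploiting the affine structure of the Heston operator; and (iii) establish the $O(\epsilon)$ accuracy by a residual estimate. For step (i), which is indicated in the paragraphs above, one substitutes $P_s^\epsilon = P_{s,0} + \sqrt\epsilon\,P_{s,1} + \epsilon\,P_{s,2} + \epsilon^{3/2}P_{s,3} + \cdots$ into the full PDE and matches powers of $\sqrt\epsilon$. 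Because $\mathcal{L}_{s,0}$ acts only in $y$ and $\mathcal{L}_{s,1} = \sqrt2\,\eta\nu\,xy\,\partial_{xy}^2$, the orders $\epsilon^{-1}$ and $\epsilon^{-1/2}$ force $P_{s,0}$ and $P_{s,1}$ to be independent of $y$; the order-$\epsilon^0$ equation is a Poisson equation for $P_{s,2}$ in $y$ whose Fredholm solvability condition is $\langle \mathcal{L}_{s,2}\rangle P_{s,0} = 0$; and, after solving $\mathcal{L}_{s,0}P_{s,2} = -(\mathcal{L}_{s,2} - \langle\mathcal{L}_{s,2}\rangle)P_{s,0}$, the order-$\epsilon^{1/2}$ solvability condition $\langle \mathcal{L}_{s,1}P_{s,2} + \mathcal{L}_{s,2}P_{s,1}\rangle = 0$ reduces to \eqref{eq:PDE_p_s_1}, the single operator $W_3^\epsilon zx\partial_x(x^2\partial_{xx})$ on the right-hand side emerging from averaging $\mathcal{L}_{s,1}$ against the invariant law of the fast factor.

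For step (ii) I would pass to $\tau = T-t$ and the forward log-price $q = r\tau + \log x$, under which $\partial_t + rx\partial_x$ becomes $-\partial_\tau$ and, with the factor $e^{-r\tau}$ pulled out, the discounting term disappears; after the rescaling $\xi = 2z$ the operator $\langle\mathcal{L}_{s,2}\rangle$ is the Heston operator $\mathcal{L}_H$ already displayed, whose $\xi$-coefficients are affine. Writing $P_{s,0} = \tfrac{e^{-r\tau}}{2\pi}\int e^{-ikq}\hat P_{s,0}(\tau,k,\xi)\,\hat h(k)\,dk$ turns \eqref{eq:PDE_p_s_0} into a linear PDE in $(\tau,\xi)$; the exponential-affine ansatz $\hat P_{s,0} = e^{C(\tau,k)+\xi D(\tau,k)}$ reduces it to a scalar Riccati ODE for $D$ with $D(0,k)=0$ and a quadrature for $C$ with $C(0,k)=0$, whose integration yields the discriminant $d(k)$, the root ratio $g(k)$, and the stated closed forms of $C$ and $D$; here $\hat h(k) = K^{1+ik}/(ik-k^2)$ is the damped generalized Fourier transform of $(x-K)^+$ in $q$, which simultaneously fixes the integration contour. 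Fourier inversion gives the formula for $P_{s,0}$.

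For $P_{s,1}$ the same transform applied to \eqref{eq:PDE_p_s_1} produces a linear \emph{inhomogeneous} ODE in $\tau$. Since $x\partial_x(x^2\partial_{xx})$ has Fourier symbol $ik^3+k^2$, the source is $b(k)$ times an affine-in-$\xi$ multiple of $\hat P_{s,0}$, so by Duhamel's principle the solution has the form $\hat P_{s,1} = b(k)\bigl(\kappa\theta\,\hat f_0(\tau,k) + \xi\,\hat f_1(\tau,k)\bigr)e^{C+\xi D}$; inserting this ansatz and matching the coefficients of $\xi^0$ and $\xi^1$ gives a coupled pair of linear first-order ODEs for $\hat f_0,\hat f_1$ with zero initial data, whose (routine) integration yields exactly the $\hat f_0,\hat f_1$ in the statement. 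Fourier inversion then gives the claimed representation of $P_{s,1}$; note that $W_3^\epsilon$ carries the factor $\sqrt\epsilon$, so $P_{s,1}$ as written is already the full first-order correction.

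For step (iii) I would define $P_{s,2},P_{s,3}$ as the (generically $y$-dependent) solutions of the Poisson equations appearing at orders $\epsilon^0$ and $\epsilon^{1/2}$, with terminal data chosen so the mismatch at $T$ is $O(\epsilon)$, set $\tilde P^\epsilon = P_{s,0} + \sqrt\epsilon\,P_{s,1} + \epsilon\,P_{s,2} + \epsilon^{3/2}P_{s,3}$, and let $Z^\epsilon = P_s^\epsilon - \tilde P^\epsilon$. By construction the terms of orders $\epsilon^{-1},\dots,\epsilon^{1/2}$ in $(\tfrac1\epsilon\mathcal{L}_{s,0} + \tfrac1{\sqrt\epsilon}\mathcal{L}_{s,1} + \mathcal{L}_{s,2})\tilde P^\epsilon$ cancel, leaving a source of size $O(\epsilon)$, so $Z^\epsilon$ solves a linear PDE with $O(\epsilon)$ source and $O(\epsilon)$ terminal value; the Feynman--Kac representation then bounds it by $C\epsilon$, whence $|P_s^\epsilon - (P_{s,0}+P_{s,1})| \le |Z^\epsilon| + \epsilon|P_{s,2}| + \epsilon^{3/2}|P_{s,3}| \le C\epsilon$. \textbf{The main obstacle} is making this last estimate rigorous: the payoff $(x-K)^+$ is only Lipschitz, so the $x$-derivatives of $P_{s,0}$—and therefore the correctors and the source terms $\mathcal{L}_{s,2}P_{s,2},\,\mathcal{L}_{s,1}P_{s,3},\,\mathcal{L}_{s,2}P_{s,3}$—become singular as $\tau\to0$, and one must show these singularities are integrable in time and controlled uniformly in $\epsilon$. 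The standard remedies are to prove the bound first for smooth, rapidly decaying payoffs and then pass to the limit by regularization, or to invoke the smoothing property of the Heston semigroup; the latter is delicate because the square-root variance factor degenerates at $z=0$, so the argument must respect the boundary behavior tied to the Feller condition $2\kappa\theta\ge\sigma^2$. The explicit-formula computations of step (ii), by contrast, are lengthy but entirely routine once the affine/Riccati ansatz is in place.
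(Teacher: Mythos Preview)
Your proposal is correct and follows essentially the same route as the paper: the singular perturbation expansion in $\sqrt\epsilon$, the centering/Poisson arguments that yield \eqref{eq:PDE_p_s_0}--\eqref{eq:PDE_p_s_1}, the substitution $\xi=2z$ that reveals the Heston operator, and the Fourier/affine-Riccati solution of the leading problem are exactly what the paper does in Appendix~\ref{sec:appendix_a}.

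There are two minor tactical differences worth noting. First, for the correction $P_{s,1}$ the paper does not insert the ansatz $b(k)\bigl(\kappa\theta\hat f_0+\xi\hat f_1\bigr)e^{C+\xi D}$ and solve the resulting coupled linear ODEs as you propose; instead it quotes from \citet{fouque2011fast} the Duhamel representations
\[
\hat f_1(\tau,k)=\int_0^\tau\!\Bigl(\tfrac{g e^{sd}-1}{g e^{\tau d}-1}\Bigr)^{2}e^{d(\tau-s)}\,ds,\qquad
\hat f_0(\tau,k)=\int_0^\tau\hat f_1(t,k)\,dt,
\]
and then evaluates these integrals by hand to obtain the closed forms in the statement. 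Your ODE route and the paper's integrate-then-simplify route are equivalent, but the paper's computation makes explicit the passage from the triple integral of \citet{fouque2011fast} to a single integral, which is the point it wants to emphasize. Second, the paper obtains $\phi(y,z)=z-y$ for the Poisson equation $\mathcal L_{s,0}\phi=y-z$ via a full spectral expansion in Laguerre eigenfunctions (Appendix~\ref{sec:appendix_b}); this is not needed for your outline, and indeed one can simply verify $\mathcal L_{s,0}(z-y)=y-z$ directly. Finally, your step~(iii) is more detailed than anything in the paper: the paper asserts the $O(\epsilon)$ accuracy but does not carry out the residual/Feynman--Kac argument or address the payoff-regularization issue you flag, effectively deferring to the general framework of \citet{fouque2011multiscale}.
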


\begin{proof}
The detailed proof is in \ref{sec:appendix_a}. 
\end{proof}

\subsection{Relationship between the VIX and our model}

Before proceeding, we show the relationship between the VIX\footnote{http://www.cboe.com/micro/vix/vixwhite.pdf}
and our model. Because the VIX at time $t$, ${\rm VIX}_{t}^{\epsilon}$,
is the square root of an integration of spot variance from $t$ to
$t+\tau_{0}$ $\left(\tau_{0}:=30/365\right)$, based on \citet{carr1998towards},
the relationship can be given by 
\begin{align}
\left({\rm VIX}_{t}^{\epsilon}\right)^{2} & =100^{2}\times{\rm E}^{\mathcal{Q}}\left[\frac{1}{\tau_{0}}\int_{t}^{t+\tau_{0}}\left(Y_{s}+Z_{s}\right)ds\right]\nonumber \\
 & =100^{2}\times\frac{1}{\tau_{0}}\int_{t}^{t+\tau_{0}}\left({\rm E}^{\mathcal{Q}}\left[Y_{s}\right]+{\rm E}^{\mathcal{Q}}\left[Z_{s}\right]\right)ds\label{eq:relation_ours}\\
 & =100^{2}\times\left(a_{1}^{\epsilon}Y_{t}+a_{2}^{\epsilon}Z_{t}+\left(a_{3}^{\epsilon}+a_{4}^{\epsilon}\right)\theta\right),\nonumber 
\end{align}
where 
\begin{align*}
a_{1}^{\epsilon} & =\frac{\epsilon}{\tau_{0}}\left(1-e^{-\tau_{0}/\epsilon}\right),\\
a_{2}^{\epsilon} & =\frac{1}{\kappa\tau_{0}}\left(1-e^{-\kappa\tau_{0}}\right)+\frac{1}{1-\kappa\epsilon}\left(\frac{1}{\kappa\tau_{0}}\left(1-e^{-\kappa\tau_{0}}\right)-\frac{\epsilon}{\tau_{0}}\left(1-e^{-\tau_{0}/\epsilon}\right)\right),\\
a_{3}^{\epsilon} & =1-a_{1}^{\epsilon},\\
a_{4}^{\epsilon} & =1-a_{2}^{\epsilon}.
\end{align*}
The following facts induced by Ito's lemma are used for the derivation
of the above formula \eqref{eq:relation_ours}. 
\begin{align*}
\text{E}\left[Y_{s}\right] & =e^{-\frac{1}{\epsilon}\left(s-t\right)}Y_{t}+\frac{1}{1-\kappa\epsilon}\left(e^{-\kappa\left(s-t\right)}-e^{-\frac{1}{\epsilon}\left(s-t\right)}\right)Z_{t}+\left[\left(1-e^{-\frac{1}{\epsilon}\left(s-t\right)}\right)-\frac{1}{1-\kappa\epsilon}\left(e^{-\kappa\left(s-t\right)}-e^{-\frac{1}{\epsilon}\left(s-t\right)}\right)\right]\theta,\\
\text{E}\left[Z_{s}\right] & =e^{-\kappa\left(s-t\right)}Z_{t}+\theta\left(1-e^{-\kappa\left(s-t\right)}\right).
\end{align*}
It is noteworthy that $\left({\rm VIX}_{t}^{\epsilon}\right)^{2}$
converges to 
\[
\left({\rm VIX}_{t}^{*}\right)^{2}=100^{2}\times\left(a_{2}^{*}Z_{t}+\left(1+a_{4}^{*}\right)\theta\right),
\]
as $\epsilon\rightarrow0$, where $a_{2}^{*}=\frac{2}{\kappa\tau_{0}}\left(1-e^{-\kappa\tau_{0}}\right)$
and $a_{4}^{*}=1-a_{2}^{*}$. In addition, ${\rm VIX}_{t}$ and ${\rm VIX}_{t}^{*}$
have the following relationship: 
\begin{align*}
\left({\rm VIX}_{t}^{\epsilon}\right)^{2} & -\left({\rm VIX}_{t}^{*}\right)^{2}=100^{2}\times\left(\frac{\epsilon}{\tau_{0}}\left(1-e^{-\tau_{0}/\epsilon}\right)\left(Y_{t}-Z_{t}\right)+\frac{\epsilon}{\tau_{0}}\left(1-e^{-\kappa\tau_{0}}\right)\left(Z_{t}-\theta\right)\right)+O\left(\epsilon^{2}\right):=\Delta{\rm VIX}_{t}^{2}.
\end{align*}

\subsection{An approximate analytic formula for VIX options}

Denoting the price for a VIX call option with strike $K$ and maturity
$T$ as $P_{v}^{\epsilon}$ at time $t$, $P_{v}^{\epsilon}$ is expressed
by 
\begin{align}
P_{v}^{\epsilon}\left(t,y,z\right) & =e^{-r\left(T-t\right)}{\rm E}_{y,z}^{\mathcal{Q}}\left[h^{\epsilon}\left(Y_{T},Z_{T}\right)\right],\label{eq:vix_option_martingale_price}
\end{align}
where $\left(Y_{t},Z_{t}\right)=\left(y,z\right)$, and $h^{\epsilon}\left(u,v\right)=\left({\rm VIX}_{T}^{\epsilon}\left(u,v\right)-K\right)^{+}$.
If $H\left(x\right):=\left(\sqrt{x}-K\right)^{+}$, $h^{\epsilon}\left(u,v\right)=H(\left({\rm VIX}_{T}^{\epsilon}\right)^{2})$.
Then, by expanding $H$ at $\left({\rm VIX}_{t}^{*}\right)^{2}$,
$h^{\epsilon}$ can be transformed to 
\begin{align*}
h^{\epsilon}\left(u,v\right) & =H(\left({\rm VIX}_{T}^{*}\right)^{2})+H'(\left({\rm VIX}_{T}^{*}\right)^{2})\Delta{\rm VIX}_{t}^{2}+\frac{1}{2}H''(\left({\rm VIX}_{T}^{*}\right)^{2})\left(\Delta{\rm VIX}_{t}^{2}\right)^{2}+\cdots\\
 & =(\sqrt{a_{2}^{*}v+\left(1+a_{4}^{*}\right)\theta}\times100-K)\mathbf{1}_{\left\{ v\ge\left(K^{2}-\left(1+a_{4}^{*}\right)\theta\right)/a_{2}^{*}\right\} }\\
 & \qquad+\epsilon\left(\frac{\left(1-e^{-\tau_{0}/\epsilon}\right)\left(u-v\right)+\left(1-e^{-\kappa\tau_{0}}\right)\left(v-\theta\right)}{2\tau_{0}\sqrt{a_{2}^{*}v+\left(1+a_{4}^{*}\right)\theta}}\right)\mathbf{1}_{\left\{ v\ge\left(K^{2}-\left(1+a_{4}^{*}\right)\theta\right)/a_{2}^{*}\right\} }\times100+O\left(\epsilon^{2}\right)\\
 & =h_{0}\left(v\right)+\epsilon h_{1}\left(u,v\right)+O\left(\epsilon^{2}\right),
\end{align*}
where $h_{0}$ and $h_{1}$ are the first and second terms of the
second line, respectively. Meanwhile, utilizing Ito's lemma, we can
show 
\begin{align*}
Y_{s} & =Z_{s}+e^{-\frac{1}{\epsilon}\left(s-t\right)}\left(Y_{t}-Z_{t}\right)+\frac{\sqrt{2}\nu}{\sqrt{\epsilon}}\int_{t}^{s}e^{-\frac{1}{\epsilon}\left(s-u\right)}\sqrt{Y_{u}}dW_{u}^{Y}\\
 & \qquad+\sum_{n=1}^{\infty}\left(\kappa\epsilon\right)^{n}\left[\left(Z_{s}-\theta\right)-e^{-\frac{1}{\epsilon}\left(s-t\right)}\left(Z_{t}-\theta\right)\right]-\sigma\sum_{n=0}^{\infty}\left(\kappa\epsilon\right)^{n}\int_{t}^{s}e^{-\frac{1}{\epsilon}\left(s-u\right)}\sqrt{Z_{u}}dW_{u}^{Z}.
\end{align*}
Using the tower property of conditional expectation and the linearity
of $h_{1}$ with respect to $u$, we obtain 
\begin{align*}
{\rm E}_{y,z}^{\mathcal{Q}}\left[h\left(Y_{T},Z_{T}\right)\right] & ={\rm E}_{y,z}^{\mathcal{Q}}\left[{\rm E}_{y,z}^{\mathcal{Q}}\left[h\left(Y_{T},Z_{T}\right)\left|Z_{T}\right.\right]\right]\\
 & ={\rm E}_{y,z}^{\mathcal{Q}}\left[{\rm E}_{y,z}^{\mathcal{Q}}\left[h_{0}\left(Z_{T}\right)+\epsilon h_{1}\left(Y_{T},Z_{T}\right)\left|Z_{T}\right.\right]\right]+O\left(\epsilon^{2}\right)\\
 & ={\rm E}_{y,z}^{\mathcal{Q}}\left[h_{0}\left(Z_{T}\right)+\epsilon h_{1}\left({\rm E}_{y,z}^{\mathcal{Q}}\left[Y_{T}\left|Z_{T}\right.\right],Z_{T}\right)\right]+O\left(\epsilon^{2}\right)\\
 & ={\rm E}_{y,z}^{\mathcal{Q}}\left[h_{0}\left(Z_{T}\right)+\epsilon h_{1}\left(Z_{T}+e^{-\left(T-t\right)/\epsilon}\left(y-z\right)+O\left(\epsilon\right),Z_{T}\right)\right]+O\left(\epsilon^{2}\right)\\
 & ={\rm E}_{y,z}^{\mathcal{Q}}\left[h_{0}\left(Z_{T}\right)+h_{1}^{*}\left(Z_{T}\right)\right]+O\left(\epsilon^{2}\right),
\end{align*}
where $h_{1}^{*}\left(v\right)=\epsilon h_{1}\left(v+e^{-\left(T-t\right)/\epsilon}\left(y-z\right),v\right)$.
Then, the facts presented so far lead to the following theorem.
\begin{thm}
\label{vix_theorem}The price $P_{v}^{\epsilon}$ in (\ref{eq:vix_option_martingale_price})
can be approximated with accuracy of order $O(\epsilon^{2})$, that
is, 
\[
\left|P_{v}^{\epsilon}-\left(P_{v,0}+P_{v,1}\right)\right|<C\epsilon^{2}.
\]
for some $C>0$. Furthermore, $P_{v,0}$ and $P_{v,1}$ are given
by 
\begin{align*}
P_{v,0}\left(t,z\right) & =e^{-rT}\int_{0}^{\infty}h_{0}(\delta\zeta)\,f(\zeta;k,\lambda)\,d\zeta,\\
P_{v,1}\left(t,y,z\right) & =e^{-rT}\int_{0}^{\infty}h_{1}^{*}(\delta\zeta)\,f(\zeta;k,\lambda)\,d\zeta,
\end{align*}
where $f\left(\zeta;k,\lambda\right)$ is the density function of
a non-central chi-square distribution with $k=\frac{4\kappa\theta}{\sigma^{2}}$
degrees of freedom and non-centrality parameter $\lambda=\frac{ze^{-\kappa\tau}}{\delta}$
with $\tau=T-t$, where $\delta=\frac{\left(1-e^{-\kappa\tau}\right)\sigma^{2}}{4\kappa}$,
and 
\begin{align*}
h_{0}\left(v\right) & =(\sqrt{a_{2}^{*}v+\left(1+a_{4}^{*}\right)\theta}\times100-K)^{+}\mathbf{1}_{\left\{ v\ge\left(K^{2}-\left(1+a_{4}^{*}\right)\theta\right)/a_{2}^{*}\right\} },\\
h_{1}^{*}\left(v\right) & =\epsilon\left(\frac{2e^{-\tau/\epsilon}a_{1}^{\epsilon}\left(y-z\right)+\kappa\epsilon a_{2}^{*}\left(v-\theta\right)}{4\sqrt{a_{2}^{*}v+\left(1+a_{4}^{*}\right)\theta}}\right)\mathbf{1}_{\left\{ v\ge\left(K^{2}-\left(1+a_{4}^{*}\right)\theta\right)/a_{2}^{*}\right\} }\times100.
\end{align*}
\end{thm}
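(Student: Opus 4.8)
The plan is to push the risk-neutral representation \eqref{eq:vix_option_martingale_price} through the three ingredients already set up before the statement --- the affine identity \eqref{eq:relation_ours} for $(\mathrm{VIX}_t^\epsilon)^2$, the Taylor expansion of $H(x)=(\sqrt{x}-K)^+$ about $(\mathrm{VIX}_T^*)^2$, and the Ito representation of $Y_s$ --- and then to evaluate the two resulting expectations in closed form. The reduction displayed just above the theorem already gives $P_v^\epsilon=e^{-r(T-t)}\,\mathrm{E}_{y,z}^{\mathcal{Q}}[\,h_0(Z_T)+h_1^*(Z_T)\,]+O(\epsilon^2)$, obtained by substituting $h^\epsilon=h_0+\epsilon h_1+O(\epsilon^2)$, conditioning on $Z_T$, using the linearity of $h_1$ in its first slot to replace $Y_T$ by $\mathrm{E}_{y,z}^{\mathcal{Q}}[Y_T\mid Z_T]$, which (from the $Y_s$ expansion; see (b) below) equals $Z_T+e^{-(T-t)/\epsilon}(y-z)+O(\epsilon)$. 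So two tasks remain: (a) compute $\mathrm{E}_{y,z}^{\mathcal{Q}}[h_0(Z_T)]$ and $\mathrm{E}_{y,z}^{\mathcal{Q}}[h_1^*(Z_T)]$ explicitly, and (b) justify the $O(\epsilon^2)$ remainder rigorously.

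For (a) the affine structure does the work. In \eqref{eq:model} the factor $Z_t$ is autonomous and is exactly a Cox--Ingersoll--Ross (square-root) diffusion with speed $\kappa$, level $\theta$, and vol-of-vol $\sigma$, so its transition law is classical: conditionally on $Z_t=z$, one has $Z_T=\delta\zeta$ with $\zeta$ non-central chi-square of $k=4\kappa\theta/\sigma^2$ degrees of freedom and non-centrality $\lambda=ze^{-\kappa\tau}/\delta$, where $\delta=(1-e^{-\kappa\tau})\sigma^2/(4\kappa)$ and $\tau=T-t$. Since, for fixed $(y,z)$, both $h_0$ and $h_1^*(v)=\epsilon h_1(v+e^{-\tau/\epsilon}(y-z),v)$ are deterministic, measurable functions of $Z_T$ alone with at most square-root growth, I would integrate them against the density $f(\cdot;k,\lambda)$ after the change of variable $Z_T=\delta\zeta$ and multiply by $e^{-r(T-t)}$; this is exactly the stated $P_{v,0}$ and $P_{v,1}$. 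The coefficients in the quoted form of $h_1^*$ then fall out of substituting $u=v+e^{-\tau/\epsilon}(y-z)$ into $\epsilon h_1$ and rewriting $(1-e^{-\tau_0/\epsilon})/\tau_0$ and $(1-e^{-\kappa\tau_0})/\tau_0$ through $a_1^\epsilon$ and $a_2^*$, so this step is pure bookkeeping.

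Task (b) has three error sources, two of them routine. The conditional-mean claim $\mathrm{E}_{y,z}^{\mathcal{Q}}[Y_T\mid Z_T]=Z_T+e^{-(T-t)/\epsilon}(y-z)+O(\epsilon)$ follows from the $Y_s$ expansion term by term: the $W^Y$-integral has zero conditional mean given $Z_T$, because $W^Y\perp W^Z$ and $Z$ is autonomous, so conditionally on the $Z$-path it is a $W^Y$-martingale increment; the series $\sum_{n\ge1}(\kappa\epsilon)^n[\cdots]$ is $O(\kappa\epsilon)=O(\epsilon)$ since $\kappa\epsilon\ll1$ makes it geometric; and the $W^Z$-integral $\sigma\sum_{n\ge0}(\kappa\epsilon)^n\int_t^T e^{-(T-u)/\epsilon}\sqrt{Z_u}\,dW_u^Z$, though not orthogonal to $Z_T$, has conditional mean $O(\epsilon)$, which I would get by Ito integration by parts --- rewriting $\sigma\sqrt{Z_u}\,dW_u^Z=dZ_u-\kappa(\theta-Z_u)\,du$, integrating, and reducing to $Z_T-\tfrac{1}{\epsilon}\int_t^T e^{-(T-u)/\epsilon}\mathrm{E}[Z_u\mid Z_T]\,du$ plus super-exponentially small and $O(\kappa\epsilon)$ terms --- and then invoking $\mathrm{E}[Z_u\mid Z_T]-Z_T=O(T-u)$ as $u\to T$ (regularity of the CIR bridge mean) against the unit-mass kernel concentrated there. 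Fed through the linear-in-$u$ map $h_1$ and the prefactor $\epsilon$ this becomes $O(\epsilon^2)$. Integrability of all of the above needs only finite low-order moments of $Z_T$, $Y_T$ and of the stochastic integrals, available from standard moment bounds for square-root diffusions together with $e^{-\tau/\epsilon}\le1$, $e^{-\tau_0/\epsilon}\le1$ and $\kappa\epsilon\ll1$; these keep the final constant $C=C(y,z,K,T,\kappa,\theta,\sigma,\nu,\eta)$ finite.

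The delicate piece --- and what I expect to be the main obstacle --- is the Taylor remainder of $H$ at $(\mathrm{VIX}_T^*)^2$, because the payoff $(\sqrt{x}-K)^+$ has a kink at $x=K^2$ and the second-order expansion used to define $h_0,h_1$ is valid only away from it. On the event that $(\mathrm{VIX}_T^*)^2$ stays a fixed distance from $K^2$, the $C^2$ estimate gives a remainder of size $O((\Delta\mathrm{VIX}_T^2)^2)=O(\epsilon^2)$ with a constant governed by moments of $Y_T,Z_T$. On the complementary event --- where $Z_T$ lies in an $O(\epsilon)$-window of the threshold $(K^2-(1+a_4^*)\theta)/a_2^*$, so that $\Delta\mathrm{VIX}_T^2$ can toggle the indicator $\mathbf{1}_{\{v\ge(K^2-(1+a_4^*)\theta)/a_2^*\}}$ and no pointwise Taylor bound exists --- I would instead bound the discrepancy crudely by $O(|\Delta\mathrm{VIX}_T^2|)=O(\epsilon)$ and use that $f(\cdot;k,\lambda)$ is bounded near that point, so the window carries probability $O(\epsilon)$ and contributes only $O(\epsilon)\cdot O(\epsilon)=O(\epsilon^2)$. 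Assembling the three sources and using $|P_v^\epsilon-(P_{v,0}+P_{v,1})|\le e^{-r(T-t)}\,\mathrm{E}_{y,z}^{\mathcal{Q}}\bigl|h^\epsilon(Y_T,Z_T)-h_0(Z_T)-h_1^*(Z_T)\bigr|$ then delivers the claimed bound. The one step that genuinely needs care is this reroute around the kink: a pointwise Taylor estimate must be replaced by a mass estimate for the CIR law near the strike.
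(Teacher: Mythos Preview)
Your proposal follows the same route as the paper. The derivation preceding the theorem already carries out the Taylor expansion of $H$, the conditioning on $Z_T$, and the replacement of $Y_T$ by its conditional mean; the paper's actual proof of the theorem is then a single sentence invoking the non-central chi-square transition law of the CIR process $Z_t$ --- precisely your step~(a).

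Where you go beyond the paper is your step~(b). The paper simply writes ``$+O(\epsilon^{2})$'' at each stage of the pre-theorem derivation without further justification, and in particular never addresses the kink of $H(x)=(\sqrt{x}-K)^{+}$ at $x=K^{2}$, where $H'$ is discontinuous and the second-order Taylor estimate fails. Your rerouting --- bounding the discrepancy by $O(|\Delta\mathrm{VIX}_T^2|)$ on an $O(\epsilon)$-window around the threshold and using the bounded CIR density to make that window carry $O(\epsilon)$ mass --- is the right idea and fills a gap the paper leaves open. So your argument is not a different approach; it is the paper's approach made honest on the one point the paper treats formally.
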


\begin{proof}
Because it is known that future values of the Cox--Ingersoll--Ross
(CIR) process $Z_{t}$ follow a non-central chi-squared distribution,
the values of $P_{v,0}$ and $P_{v,1}$ can be expressed as the expectations
with respect to the distribution \citep[c.f.][]{brigo2007interest}. 
\end{proof}
\noindent Note that the above analytic formulas in Theorem \ref{vix_theorem}
are involved in single Integrals, as in the case of the formulas for
SPX options. This means that it is possible to compute the values
without difficulty. We also strongly stress that our model has only
six parameters $\kappa$, $\theta$, $\sigma$, $\rho$, $\epsilon$,
and $W_{3}^{\epsilon}$, which is only two more than the Heston model.
Some of the parameters $\left(W_{3}^{\epsilon},\rho\right)$ and $\epsilon$
control only of SPX and VIX option prices, respectively.

\section{Empirical test}

In this section, we verify the outperformance of our model through
an empirical test using real market data. To show the benefits of
two-factor models clearly, we choose the Heston model, which is the
most well-known one-factor model, as a benchmark. While it may be
considered more reasonable to compare our model with another two-factor
model, there is no existing two-factor model that gives analytic pricing
formulas that are available and efficient for both SPX and VIX options.
For example, the double Heston model \citep{christoffersen2009shape}
provides a double integral form for the pricing of VIX options but
cannot be considered in practice because it takes a huge amount of
time to compute, as its parameters are found based on a lot of option
prices. 
\begin{figure}
\centering{}\includegraphics[scale=0.65]{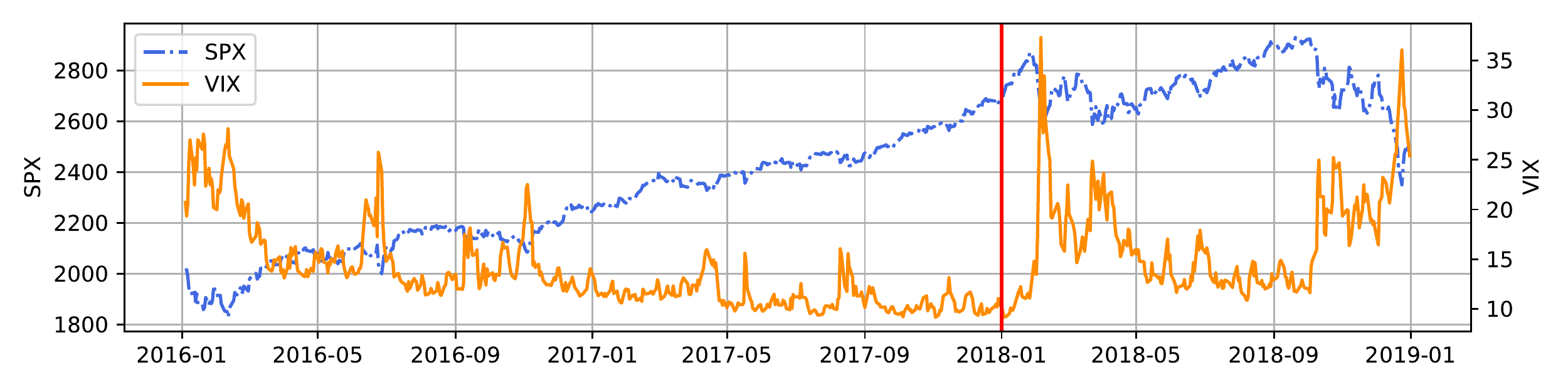}\caption{\label{fig:spx_vix}Series of the SPX and the VIX for 2016--2018.
The vertical red line indicates the dividing date, the left and the
right of which correspond to the training period (2016--2017) and
the test period (2018), respectively.}
\end{figure}

\begin{table}
\centering{}%
\begin{tabular}{cccccccccc}
\toprule 
 & \multicolumn{4}{c}{SPX} &  & \multicolumn{4}{c}{VIX}\tabularnewline
\midrule 
 & mean & deviation & skewness & kurtosis &  & mean & deviation & skewness & kurtosis\tabularnewline
\midrule
\midrule 
training & 0.001 & 0.007 & -0.530 & 4.379 &  & -0.001 & 0.073 & 0.816 & 6.230\tabularnewline
\midrule 
test & 0.000 & 0.011 & -0.435 & 3.319 &  & 0.004 & 0.100 & 2.170 & 13.289\tabularnewline
\midrule 
all & 0.000 & 0.008 & -0.581 & 5.295 &  & 0.000 & 0.083 & 1.669 & 12.385\tabularnewline
\bottomrule
\end{tabular}\caption{\label{tab:moments}Moments for the log-return series of the SPX and
the VIX for the data period.}
\end{table}

We obtain options data on the SPX and the VIX from 2016 to 2018 from
the CBOE data shop\footnote{https://datashop.cboe.com/}. We filter
them out if their daily trades are below 50, if their prices do not
reach 0.5 point, or if they expire within 3 days. We then split the
data into a training set for 2016--2017 (235,227 SPX options and
46,516 VIX options) and a test set for 2018 (176,866 SPX options and
23,347 VIX options). Figure \ref{fig:spx_vix} shows the series of
the SPX and the VIX for the data period. The vertical red line indicates
the dividing date, the left and the right of which correspond to the
training period and the test period, respectively. Many financial
crises arose during the period: Chinese stock market turbulence (early
2016), Brexit (late 2016), and US stock market downturn (2018). It
is clear that the SPX falls sharply and the VIX increases dramatically
during each crisis. Table \ref{tab:moments} summarizes the moments
for the log-returns of both series for the data period. The VIX returns
move in a much more volatile way than the SPX returns move. In addition,
the SPX returns show negative skewness and weak kurtosis ($>3$),
while the VIX returns show positive skewness and strong kurtosis ($\gg3$).
These findings imply that VIX options tend to be more expensive and
riskier than SPX options are.

We adopt a two-step approach for calibrations for the Heston model
and our model. Specifically, we perform a step-by-step calibration,
in which the prices for specific kinds of products (e.g., SPX option)
among all price data are used for parameter estimation for each step
only. Such an approach is often observed in various studies in the
literature (\citealp{bayer2013fast,goutte2017regime}). We first attempt
to find the parameters $\kappa$, $\theta$, $\sigma$, and $\epsilon$
to control VIX option prices, and the other parameters $\rho$, $W_{3}^{\epsilon}$
are sought with SPX option prices. Note that the VIX option prices
may be more sensitive to the parameters $\kappa$, $\theta$, and
$\sigma$ than the SPX option prices. Because a highly precise optimization
needs highly sensitive Greeks, it seems reasonable to prefer VIX
options to SPX options for an estimation of the parameters $\kappa$,
$\theta$, and $\sigma$. We first define the method for the Heston
model with the four parameters $\kappa,\theta,\sigma$, and $\rho$
as follows. 
\begin{enumerate}
\item First, perform calibration with VIX options 
\[
\underset{\kappa,\theta,\sigma}{{\rm argmin}}\sum_{i\in I}\left(\sum_{j\in J_{v,i}}\left(\frac{P_{v}^{mdl}\left(z_{i},\kappa,\theta,\sigma;t_{i,}K_{j},T_{j}\right)-P_{v}^{mkt}\left(t_{i,},K_{j},T_{j}\right)}{0.1+P_{v}^{mkt}\left(t_{i,},K_{j},T_{j}\right)}\right)^{2}\right)
\]
such that 
\begin{equation}
{\rm VIX}_{i}^{mkt}=\sqrt{b_{2}^{*}z_{i}+b_{4}^{*}\theta}.\label{eq:VIX_relation_heston}
\end{equation}
where $b_{2}^{*}=a_{2}^{*}/2$ and $b_{4}^{*}=\left(1+a_{4}^{*}\right)/2$. 
\item Second, perform calibration with SPX options 
\end{enumerate}
\[
\underset{\rho}{{\rm argmin}}\sum_{i\in I}\left(\sum_{j\in J_{s,i}}\left(\frac{P_{s}^{mdl}\left(\rho;t_{i,}K_{j},T_{j};z_{i},\kappa,\theta,\sigma\right)-P_{s}^{mkt}\left(t_{i,},K_{j},T_{j}\right)}{0.1+P_{s}^{mkt}\left(t_{i,},K_{j},T_{j}\right)}\right)^{2}\right).
\]
The superscripts $mkt$ and $mdl$ are used to indicate the market
price and the model price for an option, respectively. $I$ is the
index set of the dates for the training set. $J_{v,i}$ and $J_{s,i}$
are the index sets of the VIX options and the SPX options, respectively,
on the $i$th date. We explain the two-step method for our model with
the six parameters $\kappa,\theta,\sigma,\rho,\epsilon$, and $W_{3}^{\epsilon}$
as follows. 
\begin{enumerate}
\item First, perform calibration with VIX options 
\[
\underset{\kappa,\theta,\sigma,\epsilon}{{\rm argmin}}\sum_{i\in I}\left(\underset{y_{i},z_{i}}{{\rm argmin}}\sum_{j\in J_{v,i}}\left(\frac{P_{v}^{mdl}\left(y_{i},z_{i},\kappa,\theta,\sigma,\epsilon;t_{i,}K_{j},T_{j}\right)-P_{v}^{mkt}\left(t_{i,},K_{j},T_{j}\right)}{0.1+P_{v}^{mkt}\left(t_{i,},K_{j},T_{j}\right)}\right)^{2}\right)
\]
such that 
\begin{equation}
{\rm VIX}_{i}^{mkt}=\sqrt{a_{1}^{\epsilon}y_{i}+a_{2}^{\epsilon}z_{i}+\left(a_{3}^{\epsilon}+a_{4}^{\epsilon}\right)\theta}.\label{eq:VIX_relation_ours}
\end{equation}
\item Second, perform calibration with SPX options 
\end{enumerate}
\[
\underset{\rho,W_{3}^{\epsilon}}{{\rm argmin}}\sum_{i\in I}\left(\sum_{j\in J_{s,i}}\left(\frac{P_{s}^{mdl}\left(\rho,W_{3}^{\epsilon};t_{i,}K_{j},T_{j};y_{i},z_{i},\kappa,\theta,\sigma\right)-P_{s}^{mkt}\left(t_{i,},K_{j},T_{j}\right)}{0.1+P_{s}^{mkt}\left(t_{i,},K_{j},T_{j}\right)}\right)^{2}\right).
\]
Note that, under the Heston model, $z_{i}$ is uniquely determined
by the relationship (\ref{eq:VIX_relation_heston}) between the model
and the VIX. By contrast, under our model, $y_{i}$ and $z_{i}$ cannot
be uniquely determined by the relationship (\ref{eq:VIX_relation_ours}),
which is why we also minimize the daily objective with respect to
$y_{i}$ and $z_{i}$ for each $i$th date. Because this process requires
large computational resources, the calibrations for both models are
implemented based upon C++ and OpenMP, and executed in parallel on
24 CPU cores of two Intel Xeon Gold 5118. As a result, we obtain the
following parameters for each model: $\kappa=3.43$, $\theta=0.0400$,
$\sigma=0.424$, and $\rho=-1$ for the Heston model and $\kappa=3.58$,
$\theta=0.021$ $\sigma=0.347$, $\rho=-1$, $\epsilon=0.0096$, and
$W_{3}^{\epsilon}=0.0150$ for our model. Note that the two time scales
for our model are well separated, as $\kappa=3.58$ and $1/\epsilon=104.17$.
\begin{figure}[t]
\centering{}\subfloat[errors for SPX options]{\begin{centering}
\includegraphics[scale=0.65]{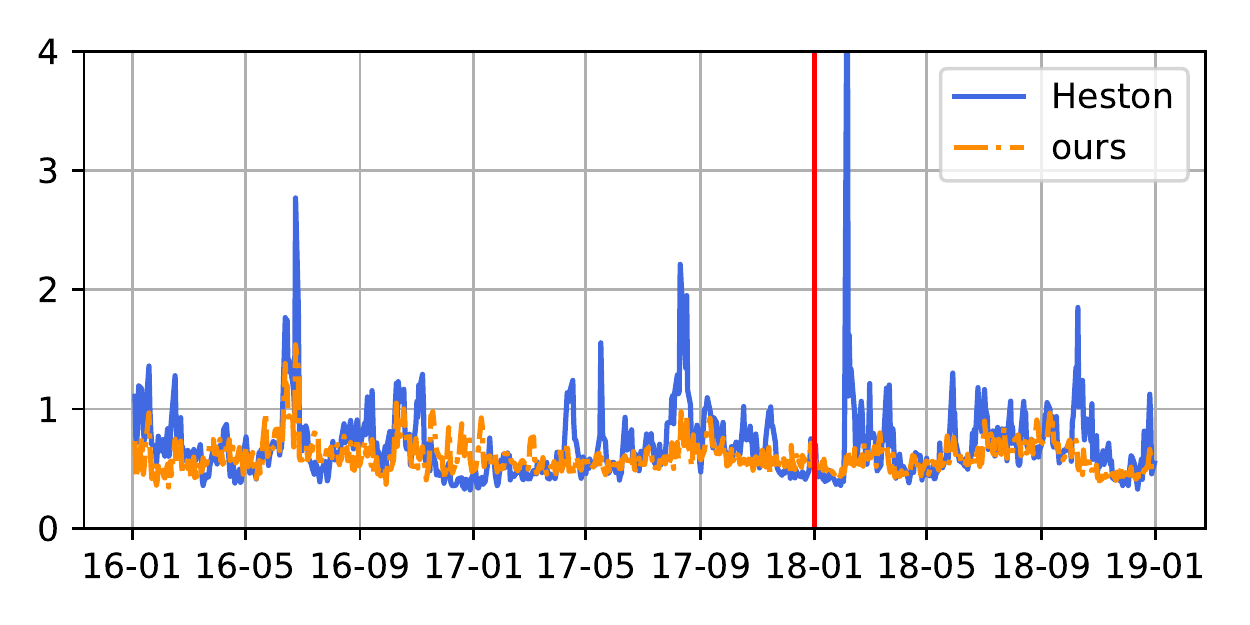}
\par\end{centering}
}\subfloat[errors for VIX options]{\begin{centering}
\includegraphics[scale=0.65]{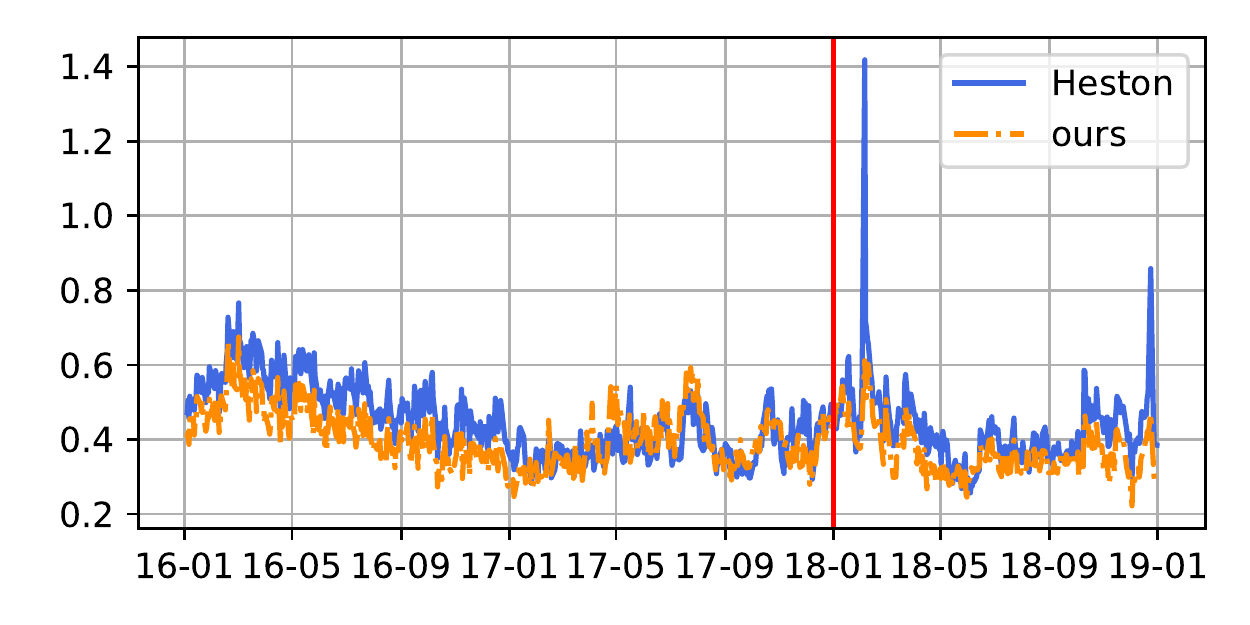}
\par\end{centering}
}\caption{\label{fig:daily-errors}Comparison of daily errors of both models
by product types for 2016--2018. The vertical red line indicates
the dividing date, the left and the right of which are the training
period (2016--2017) and the test period (2018), respectively.}
\end{figure}

\begin{figure}[t]
\centering{}\subfloat[$\sqrt{z_{i}}$ for the Heston model]{\begin{centering}
\includegraphics[scale=0.65]{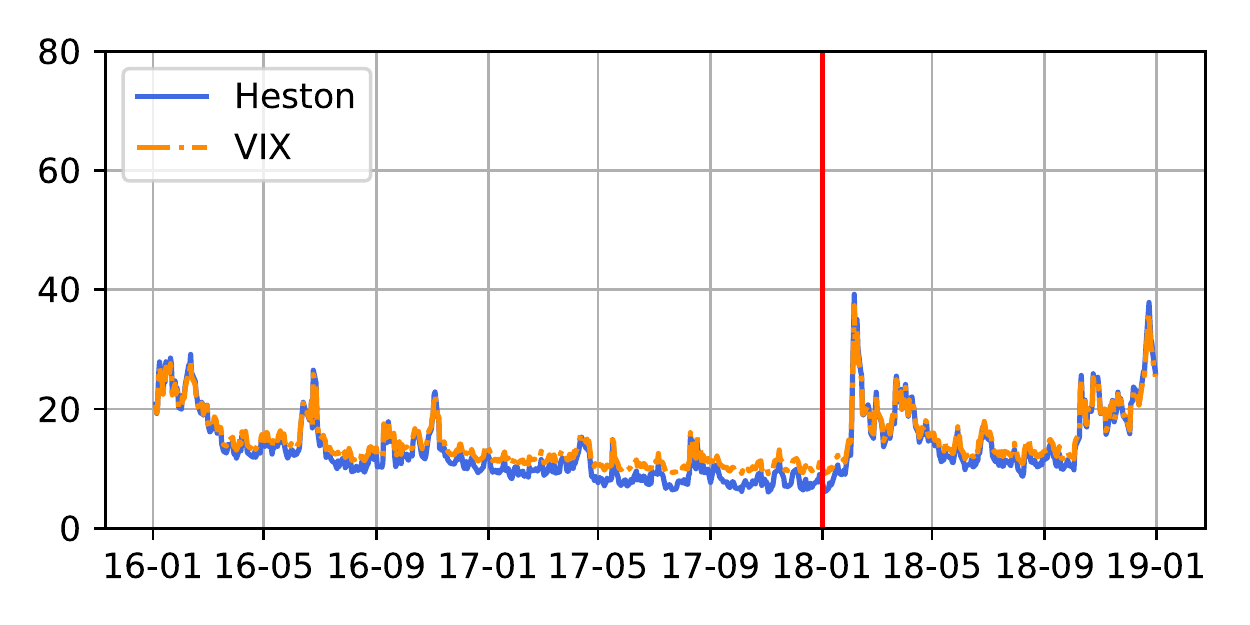}
\par\end{centering}
}\subfloat[$\sqrt{y_{i}+z_{i}}$ for our model]{\begin{centering}
\includegraphics[scale=0.65]{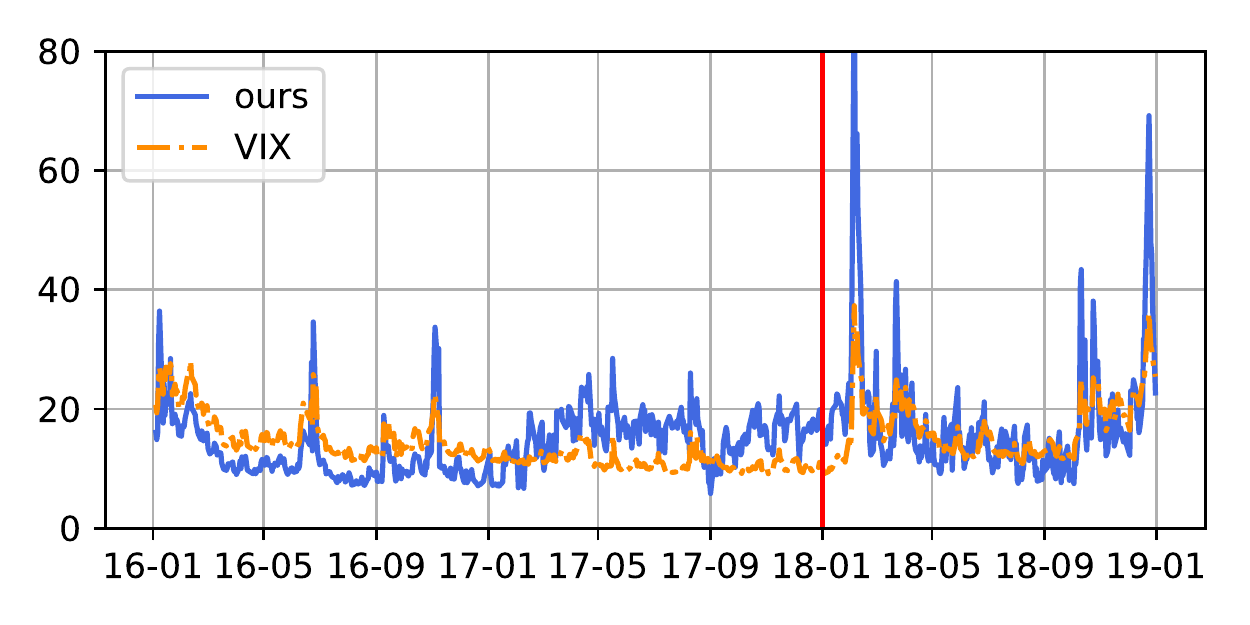}
\par\end{centering}
}\caption{\label{fig:spot-volatility}Spot volatilities of both models for 2016--2018.
The vertical red line indicates the dividing date.}
\end{figure}

Figure \ref{fig:daily-errors} compares the daily errors of the Heston
model and our model for the data period. The left and right figures
indicate the errors for the SPX options and the VIX options, respectively.
As before, the vertical red line represents the dividing date for
the training data and the test data, respectively. At first glance,
our model produces less errors, and it is more robust than the Heston
model. Precisely, our model reduces the errors on the training sets
of the SPX and the VIX options by 9.9\% and 13.2\%, respectively,
and decreases the errors on the test sets of the SPX and the VIX options
by 13.0\% and 16.5\%, respectively, compared with the Heston model
(refer to \ref{sec:errors_of_test}). In particular, the Heston model
gives fairly large errors for particular dates. We can postulate a
plausible reason for those phenomena from Figure \ref{fig:spot-volatility},
which draws the spot volatilities of both models for the data period.
In the figure, the hidden process for our model appears to be more
dynamic than that for the Heston model. This is because the spot volatility
$\sqrt{z_{i}}$ for the Heston model is strongly correlated with the
VIX but $\sqrt{y_{i}+z_{i}}$ for our model is not. Our model captures
short-term volatility that is difficult to detect and produces a more
volatile process. If our model is assumed to be correct, we conclude
that the spot volatilities for the Heston model are fairly often biased,
especially when sudden strong shocks impact the market. The bias eventually
results in large fitting or prediction errors of the Heston model
for short-term products, as shown in Table \ref{sec:errors_of_test}.
The table sums up the errors separately by time to maturity. The table
results confirm that our model performs better than the Heston model
as the time to maturity for an option becomes shorter, provided that
the time to expiration is not too short. Note that the asymptotic
method is valid when the time to maturity is not too short.

\begin{figure}[t]
\centering{}\subfloat[$y=0.0234$, $z=0.0194$]{\begin{centering}
\includegraphics[scale=0.63]{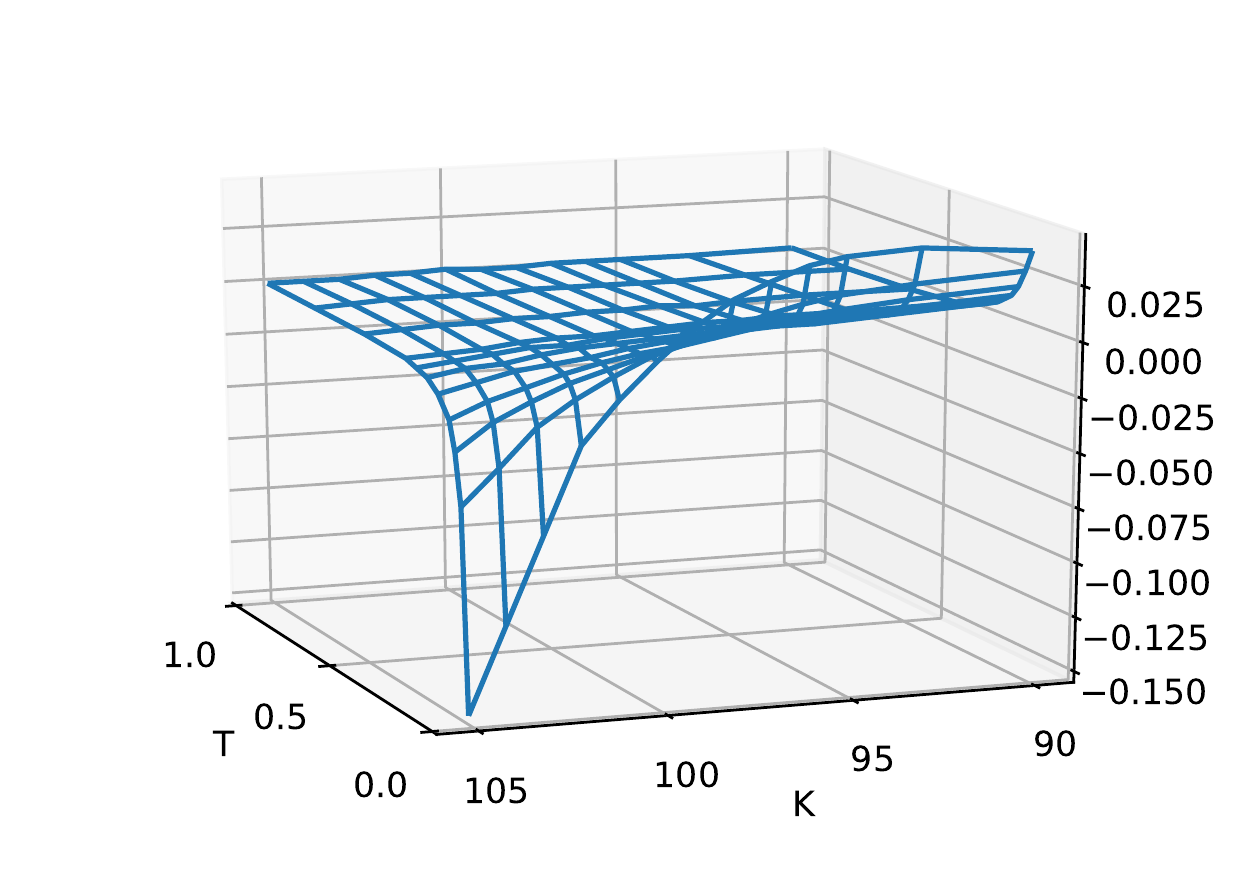}
\par\end{centering}
}\subfloat[$y=0.0110$, $z=0.0203$]{\begin{centering}
\includegraphics[scale=0.63]{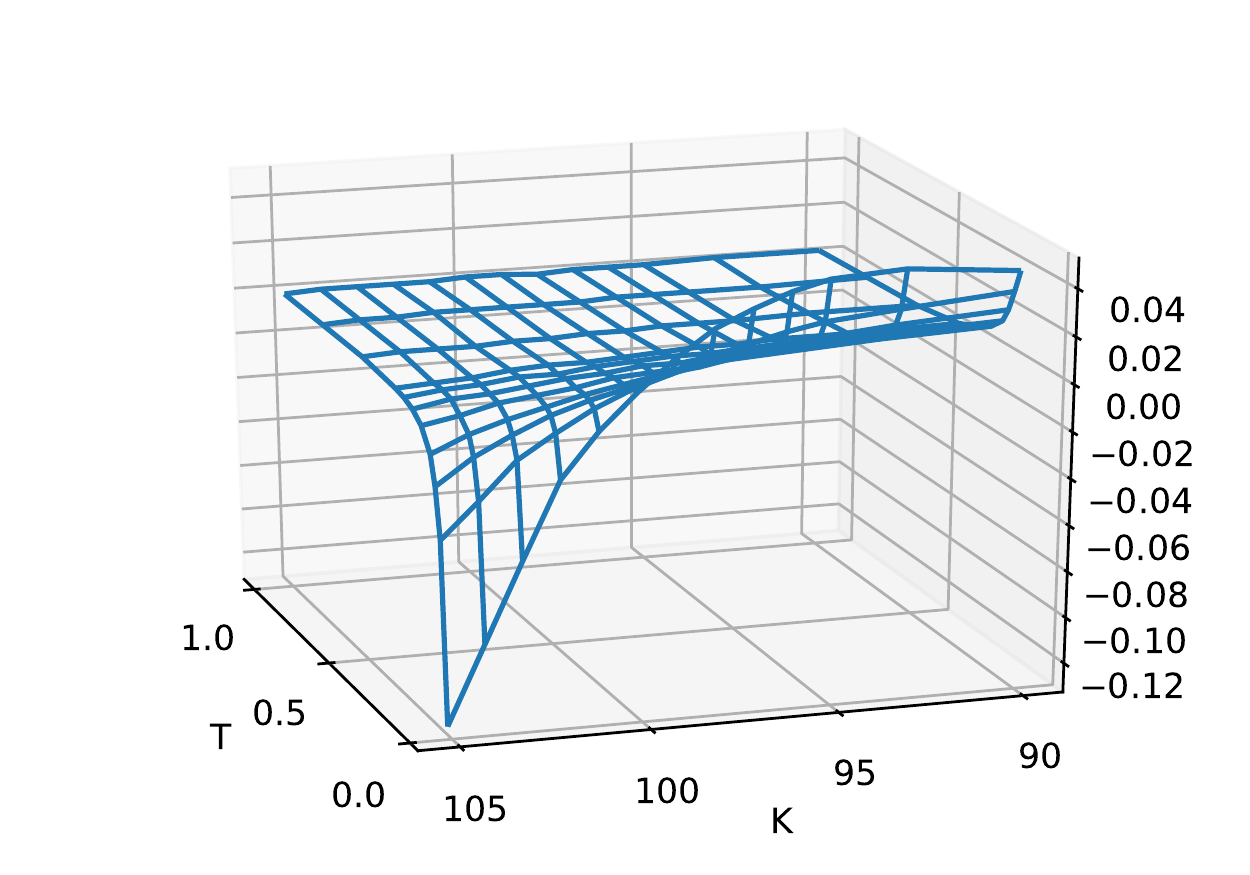}
\par\end{centering}
}\caption{\label{fig:SPX-imvol}Two differences between the SPX implied volatility
surfaces from the corrected and uncorrected prices. Here, $z=0.0197$
for the uncorrected case.}
\end{figure}

\begin{figure}[t]
\centering{}\subfloat[$y=0.0234$, $z=0.0194$]{\begin{centering}
\includegraphics[scale=0.63]{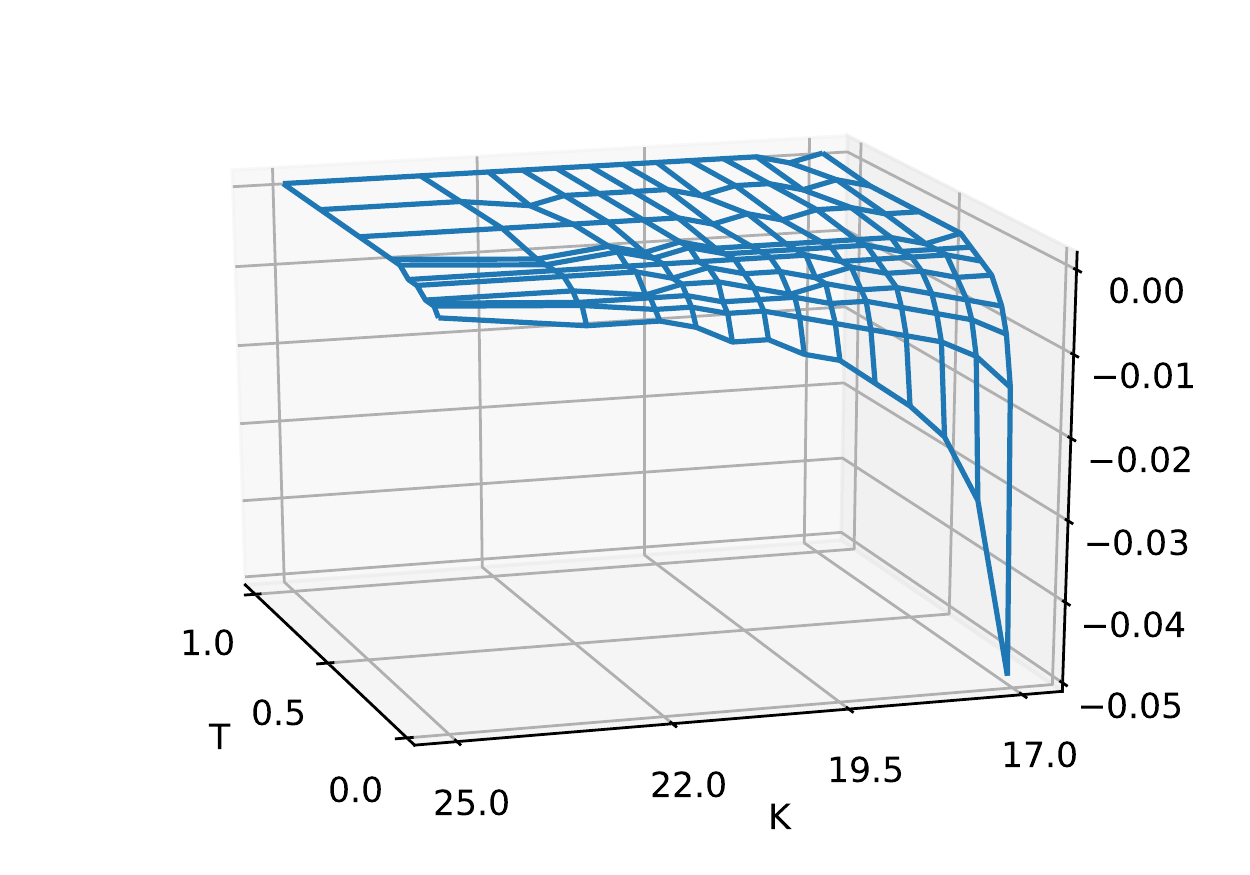}
\par\end{centering}
}\subfloat[$y=0.0110$, $z=0.0203$]{\begin{centering}
\includegraphics[scale=0.63]{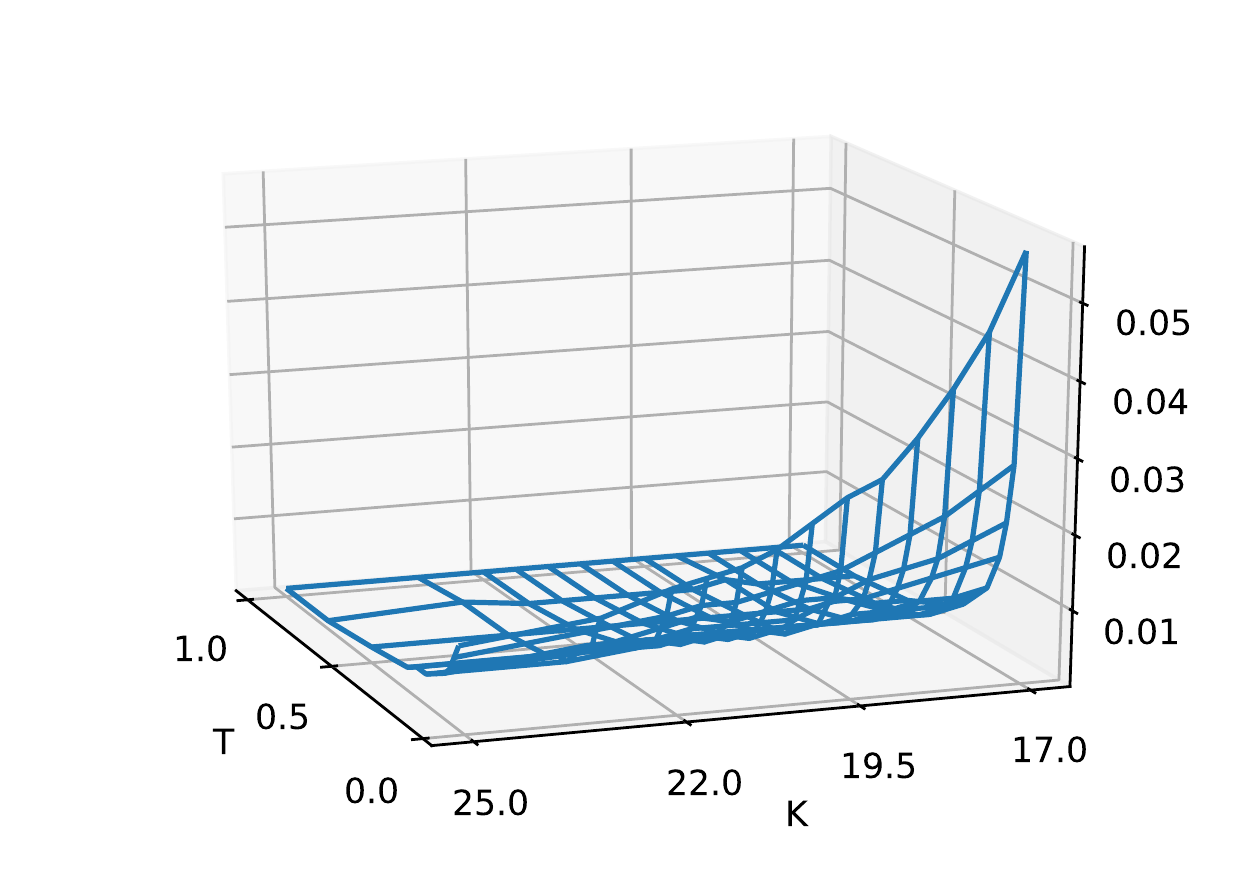}
\par\end{centering}
}\caption{\label{fig:VIX-imvol}Two differences between the VIX implied volatility
surfaces from the corrected and uncorrected prices.}
\end{figure}

Furthermore, implied volatility surfaces for SPX options and VIX options
clearly show how appropriate our model is for fitting to short-term
products. To define the implied volatility for VIX options consistently
, we first consider the following simple model: 
\[
d{\rm VIX}_{t}=\sigma dW_{t}.
\]
In other words, ${\rm VIX}_{T}$ follows the normal distribution $N\left({\rm VIX}_{t},\sigma\sqrt{T-t}\right)$
under the model. Then, it is easy to show that the price $c$ of a
VIX call option with maturity $T$ and strike $K$ is given by 
\[
c\left({\rm VIX}_{t}\right)=\left({\rm VIX}_{t}-K\right)N\left(\frac{{\rm VIX}_{t}-K}{\sigma\sqrt{T-t}}\right)+\sigma\sqrt{T-t}\ n\left(\frac{{\rm VIX}_{t}-K}{\sigma\sqrt{T-t}}\right)
\]
where $N$ and $n$ are the cumulative distribution function and the
probability density function of the unit Gaussian distribution, respectively.
As a result, for the market price $c^{mkt}$, the corresponding implied
volatility for the VIX option can be defined as $\sigma^{imp}$ to
match the model price with $c^{mkt}$.

We now generate implied volatility surfaces $\sigma_{0,1}^{imp}$
from the corrected prices $P_{s,0}\left(z\right)+P_{s,1}\left(z\right)$
and $P_{v,0}\left(z\right)+P_{v,1}\left(y,z\right)$ with the estimated
parameters $\kappa=3.58$, $\theta=0.021$ $\sigma=0.347$, $\rho=-1$,
$\epsilon=0.0096$, and $W_{3}^{\epsilon}=0.0150$ for two hidden
states $\left(y,z\right)=\left(0.0234,0.0194\right)$ and $\left(y,z\right)=\left(0.0110,0.0203\right)$.
We also generate an additional implied volatility surface $\sigma_{0}^{imp}$
from the uncorrected prices $P_{s,0}\left(z\right)$ and $P_{v,0}\left(z\right)$
for hidden state $z=0.0197$, which is equivalent to generating the
surface using the corrected prices with the same parameters as the
preceding case but $\epsilon=0$ and $W_{3}^{\epsilon}=0$. All the
hidden states $y$, $z$ are chosen so that the model value of the
VIX is $20$, that is, ${\rm VIX}_{t}=20$ in the relationship \eqref{eq:relation_ours}.

Figures \ref{fig:SPX-imvol} and \ref{fig:VIX-imvol} present the
differences $\sigma_{0,1}^{imp}-\sigma_{0}^{imp}$ for the SPX cases
and the VIX cases, respectively. The figures confirm once again that
the corrected terms $P_{s,1}$ and $P_{v,1}$ allow excellent short-term
flexibility, and thereby capture the fluctuations in the slope and
the level of the volatility smirk. Based on Figure \ref{fig:SPX-imvol},
$P_{s,1}$ may make short-term in-the-money (SITM) options more expensive
than short-term out-of-the-money (SOTM) options in a somewhat robust
way. However, based on Figure \ref{fig:VIX-imvol}, a subtle change
of $z$ in our model seems to lead to a significantly different short-term
VIX market. When the short-term state $y$ is smaller than the mid-term
state $z$, as in the right figure, $P_{v,1}$ makes the SITM options
more expensive than the SOTM options, as in the SPX cases, while $P_{v,1}$
works in the opposite way when $y$ is larger than $z$, as in the
left figure. These phenomena for VIX volatility surfaces may occur
because $Y_{t}$ quickly converges $Z_{t}$ in our model. If $y>z$,
the spot volatility might decline in the short term, which means that
the VIX SITM calls would probably not be exercised. If $y<z$, the
spot volatility might increase for a short while, which would increase
the value of the VIX SOTM call. This is simply the mechanism of our
model showing diverse expression for the short-term VIX market in
spite of the constraint \eqref{eq:relation_ours}.

\section{Concluding remarks}

In this paper, we study consistent and efficient pricing of SPX and
VIX options under a new two-factor stochastic volatility model. Specifically,
this two-factor model is proposed by adding a fast mean-reverting
factor into the Heston model. Doing so facilitates joint pricing of
the SPX option and the VIX option. In practice, joint modeling of
both options is important, because an arbitrage relationship exists
between the SPX option market and the VIX option market. Moreover,
joint modeling leads to a calibration based on extensive market data,
including SPX data and VIX data. Since our analytic solutions are
derived as one-dimensional integrals, it is obvious that the pricing
solutions are computationally very efficient. Our experiment using
hundreds of thousands of options shows that the model reduces the
errors by 9.9-16.5\%, compared to the single-scale model of Heston.
The error reduction is possible because the additional factor reflects
short-term impacts on the market, which is difficult to achieve with
only one factor.

In fact, non-affine models have been less studied for explicit pricing
formulas because the involved problems are hard to solve. But the
models are more suitable to express actual dynamics (see \citet{mencia2013valuation}
and \citet{kaeck2013continuous}). In this context, our models should
be extended to have a non-affine form. We leave the topic as future
work. 

\section*{Acknowledgment}

Jaegi Jeon received financial support from the National Research Foundation
of Korea (NRF) of the Korean government (Grant No. NRF-2019R1I1A1A01062911).
Geonwoo Kim received financial support from the NRF (Grant No. NRF-2017R1E1A1A03070886).
Jeonggyu Huh received financial support from the NRF (Grant No. NRF-2019R1F1A1058352). 

\section*{Declarations of interest : none.}

\section*{References}

\bibliographystyle{elsarticle-harv}
\bibliography{ref}

\appendix

\section{\label{sec:appendix_a}Derivation of analytic formula for SPX options}

Putting $P_{s}^{\epsilon}=P_{s,0}+\sqrt{\epsilon}P_{s,1}+\epsilon P_{s,2}+\epsilon\sqrt{\epsilon}P_{s,3}+\cdots$
into the PDE (\ref{eq:spx_full_PDE}), 
\begin{align*}
 & \frac{1}{\epsilon}\mathcal{L}_{s,0}P_{s,0}+\frac{1}{\sqrt{\epsilon}}\left(\mathcal{L}_{s,0}P_{s,1}+\mathcal{L}_{s,1}P_{s,0}\right)+\left(\mathcal{L}_{s,0}P_{s,2}+\mathcal{L}_{s,1}P_{s,1}+\mathcal{L}_{s,2}P_{s,0}\right)\\
 & \qquad\qquad\qquad\qquad\qquad\qquad+\sqrt{\epsilon}\left(\mathcal{L}_{s,0}P_{s,3}+\mathcal{L}_{s,1}P_{s,2}+\mathcal{L}_{s,2}P_{s,1}\right)+\cdots=0.
\end{align*}
As mentioned in Theorem \ref{spx_theorem}, we intend for the sum
of the leading term $P_{s,0}$ and the first non-zero correction term
$P_{s,1}$ to approximate $P_{s}^{\epsilon}$ within accuracy $O\left(\epsilon\right)$,
that is, $\left|P_{s}^{\epsilon}-\left(P_{s,0}+\sqrt{\epsilon}P_{s,1}\right)\right|<C\epsilon$
for some positive $C$. For this purpose, the following equations
should be satisfied: 
\begin{gather}
\mathcal{L}_{s,0}P_{s,0}=0,\label{eq:0.5_order}\\
\mathcal{L}_{s,0}P_{s,1}+\mathcal{L}_{s,1}P_{s,0}=0,\label{eq:1_order}\\
\mathcal{L}_{s,0}P_{s,2}+\mathcal{L}_{s,1}P_{s,1}+\mathcal{L}_{s,2}P_{s,0}=0,\label{eq:1.5_order}\\
\mathcal{L}_{s,0}P_{s,3}+\mathcal{L}_{s,1}P_{s,2}+\mathcal{L}_{s,2}P_{s,1}=0.\label{eq:2_order}
\end{gather}
$P_{s,0}$ should not depend on $y$ so that the Poisson equation
(\ref{eq:0.5_order}) has a solution. This means that $P_{s,1}$ also
does not depend on $y$ owing to (\ref{eq:1_order}). From (\ref{eq:1.5_order}),
we then obtain 
\begin{equation}
\mathcal{L}_{s,0}P_{s,2}+\mathcal{L}_{s,2}P_{s,0}=0.\label{eq:1.5_order_2}
\end{equation}
Thus, the centering condition for $\mathcal{L}_{s,0}$ for the Poisson
equation (\ref{eq:1.5_order_2}) results in the following PDE (\ref{eq:PDE_p_s_0})
for $P_{s,0}$:

\begin{equation}
\left\langle \mathcal{L}_{s,2}P_{s,0}\right\rangle =\left\langle \mathcal{L}_{s,2}\right\rangle P_{s,0}=0,\label{eq:p_0_PDE_app}
\end{equation}
\[
P_{s,0}\left(T,x,z\right)=\left(x-K\right)^{+},
\]
where $\left\langle \cdot\right\rangle $ is the expectation with
respect to the invariant distribution $\Phi$ for $Y_{t}$, that is,
$\left\langle f\right\rangle =\int f\left(y\right)\Phi\left(y\right)dy$.
Moreover, because $Y_{t}$ is a CIR process, $Y_{\infty}\sim\Gamma\left(\gamma,\nu^{2}\right)$,
which means that the density function for $Y_{\infty}$ is 
\[
\Phi\left(w\right)=\frac{1}{\nu^{2\gamma}\Gamma\left(\gamma\right)}w^{\gamma-1}e^{-\left(w/\nu^{2}\right)}\boldsymbol{1}_{\left(0,\infty\right)}\left(w\right),
\]
where $\gamma=\frac{z}{\nu^{2}}$. The centering condition indicates
that $\left\langle g\right\rangle =0$ should hold for Poisson equation
$\mathcal{L}f+g=0$. The condition is necessary for a Poisson equation
to have a solution.

In addition, (\ref{eq:1.5_order_2}) yields the following equation:
\begin{align}
P_{s,2} & =-\mathcal{L}_{s,0}^{-1}\left(\mathcal{L}_{s,2}-\left\langle \mathcal{L}_{s,2}\right\rangle \right)P_{s,0}+c\left(t,x,z\right)\nonumber \\
 & =-\frac{1}{2}x^{2}\phi\left(y,z\right)\partial_{xx}^{2}P_{s,0}+c\left(t,x,z\right),\label{eq:P_s_2}
\end{align}
where 
\begin{equation}
\mathcal{L}_{s,0}\phi\left(y,z\right)=y-z.\label{eq:poisson_eq}
\end{equation}
In fact, we show $\phi\left(y,z\right)=z-y$, the proof of which is
given in \ref{sec:appendix_b}. On one hand, if we put (\ref{eq:P_s_2})
into (\ref{eq:2_order}) and use the centering condition for $\mathcal{L}_{s,0}$
in (\ref{eq:2_order}), we achieve the PDE (\ref{eq:PDE_p_s_1}) for
$P_{s,1}$ in the following ways: 
\begin{align}
\left\langle \mathcal{L}_{s,2}\right\rangle P_{s,1} & =-\left\langle \mathcal{L}_{s,1}P_{s,2}\right\rangle \nonumber \\
 & =\frac{1}{\sqrt{2}}\eta\nu\left\langle y\phi_{y}\left(y,z\right)\right\rangle x\partial_{x}\left(x^{2}\partial_{xx}^{2}P_{0}\right)\nonumber \\
 & =W_{3}zx\partial_{x}\left(x^{2}\partial_{xx}^{2}\right)P_{s,0}.\label{eq:p_1_PDE_app}
\end{align}
with the corresponding final condition 
\[
P_{s,1}\left(T,x,z\right)=0,
\]
where $W_{3}=-\frac{1}{\sqrt{2}}\eta\nu$ (recall that $W_{3}^{\epsilon}=-\frac{1}{\sqrt{2}}\eta\nu\sqrt{\epsilon}$,
i.e., $W_{3}=W_{3}^{\epsilon}/\sqrt{\epsilon}$).

On the other hand, if $\xi:=2z$, equations (\ref{eq:p_0_PDE_app})
and (\ref{eq:p_1_PDE_app}) are transformed as follows, and they are
associated with the PDE operator $\mathcal{L}_{H}$ for the Heston
model, whose parameters are the mean reversion rate of $\kappa$,
the long-run variance $2\theta$, the volatility of variance $\sqrt{2}\sigma$,
and the correlation $\sqrt{2}\rho$ between stock price and its variance.
\begin{gather}
\mathcal{L}_{H}\tilde{P}_{s,0}\left(t,x,\xi\right)=0,\nonumber \\
\mathcal{L}_{H}\tilde{P}_{s,1}\left(t,x,\xi\right)=\frac{1}{2}W_{3}\xi x\partial_{x}\left(x^{2}\partial_{xx}^{2}\right)\tilde{P}_{s,0}\left(t,x,\xi\right),\label{eq:p_1_PDE_xi}
\end{gather}
where 
\[
\mathcal{L}_{H}=\partial_{t}+rx\partial_{x}+\kappa\left(2\theta-\xi\right)\partial_{\xi}+\frac{1}{2}x^{2}\xi\partial_{xx}^{2}+\frac{1}{2}\left(\sqrt{2}\sigma\right)^{2}\xi\partial_{\xi\xi}^{2}+\left(\frac{1}{\sqrt{2}}\rho\right)\left(\sqrt{2}\sigma\right)x\xi\partial_{x\xi}^{2}-r.
\]
Similar to \citet{fouque2011fast}, by utilizing the feasibility of
the Heston model, we can achieve the following solutions of the abovementioned
PDEs: 
\begin{gather*}
\tilde{P}_{s,0}\left(t,x,\xi\right)=\frac{e^{-r\tau}}{2\pi}\int_{\mathbb{}}e^{-ikq}\hat{G}\left(\tau,k,2\xi\right)\hat{h}\left(k\right)dk,\\
\tilde{P}_{s,1}\left(t,x,\xi\right)=\frac{e^{-r\tau}}{2\pi}\int_{\mathbb{}}e^{-ikq}b\left(k\right)\left(\kappa\theta\hat{f}_{0}\left(\tau,k\right)+2z\hat{f}_{1}\left(\tau,k\right)\right)\hat{G}\left(\tau,k,\xi\right)\hat{h}\left(k\right)dk,
\end{gather*}
where $\tau=T-t$, $q=r\tau+\log x$, 
\begin{align*}
b\left(k\right) & =-\frac{1}{2}W_{3}\left(ik^{3}+k^{2}\right),\\
\hat{f}_{0}\left(\tau,k\right) & =\int_{0}^{\tau}\hat{f}_{1}\left(t,k\right)dt,\\
\hat{f}_{1}\left(\tau,k\right) & =\int_{0}^{\tau}\left(\frac{g\left(k\right)e^{sd\left(k\right)}-1}{g\left(k\right)e^{\tau d\left(k\right)}-1}\right)^{2}e^{d\left(k\right)\left(\tau-s\right)}ds,
\end{align*}
and $\hat{G}\left(\tau,k,z\right)$, $\hat{h}\left(k\right)$, $d\left(k\right)$,
and $g\left(k\right)$ are defined in Theorem \ref{spx_theorem}.
To compute $\tilde{P}_{s,0}$ and $\tilde{P}_{s,1}$, numerical integrations
need to be associated with a single integration and a triple integration.
However, as in the foregoing discussion in \ref{subsec:An-approximate-analytic},
numerical methods for the triple integration are too computationally
intensive. Fortunately, we can further simplify $\hat{f}_{0}$ and
$\hat{f}_{1}$, because the right-hand side of (\ref{eq:p_1_PDE_xi})
is linear with respect to $\xi$. The induction process is given in
detail as follows: 
\begin{align*}
\hat{f}_{1}\left(\tau,k\right) & =\int_{0}^{\tau}\frac{\left(g\left(k\right)e^{sd\left(k\right)}-1\right)^{2}}{\left(g\left(k\right)e^{\tau d\left(k\right)}-1\right)^{2}}e^{d\left(k\right)\left(\tau-s\right)}ds\\
 & =\frac{e^{\tau d\left(k\right)}}{\left(g\left(k\right)e^{\tau d\left(k\right)}-1\right)^{2}}\int_{0}^{\tau}\frac{\left(g\left(k\right)e^{sd\left(k\right)}-1\right)^{2}}{e^{sd\left(k\right)}}ds\\
 & =\left(\frac{e^{\tau d\left(k\right)}}{\left(g\left(k\right)e^{\tau d\left(k\right)}-1\right)^{2}}\right)\left(\frac{1}{d\left(k\right)}\left(g\left(k\right)^{2}e^{\tau d\left(k\right)}-2\tau d\left(k\right)g\left(k\right)-e^{-\tau d\left(k\right)}+1-g\left(k\right)^{2}\right)\right)+1\\
 & =\frac{e^{\tau d\left(k\right)}\left(g\left(k\right)^{2}\left(e^{\tau d\left(k\right)}-1\right)-2\tau d\left(k\right)g\left(k\right)+1\right)-1}{d\left(k\right)\left(g\left(k\right)e^{\tau d\left(k\right)}-1\right)^{2}}.
\end{align*}
If $I\left(\tau,k\right):=\frac{e^{\tau d\left(k\right)}}{\left(g\left(k\right)e^{\tau d\left(k\right)}-1\right)^{2}}$,
$\hat{f}_{1}\left(\tau,k\right)=I\left(t,k\right)\int_{0}^{t}\frac{1}{I\left(s,k\right)}ds$
from the second line of the above equations. Thus, we can obtain

\begin{align*}
\hat{f}_{0}\left(\tau,k\right) & =\int_{0}^{\tau}I\left(t,k\right)\int_{0}^{t}\frac{1}{I\left(s,k\right)}dsdt\\
 & =\int_{0}^{\tau}\frac{1}{I\left(s,k\right)}\int_{s}^{\tau}I\left(t,k\right)dtds\\
 & =\int_{0}^{\tau}\frac{\left(g\left(k\right)e^{sd\left(k\right)}-1\right)^{2}}{e^{sd\left(k\right)}}\left(\frac{1}{d\left(k\right)g\left(k\right)-d\left(k\right)g\left(k\right)^{2}e^{\tau d\left(k\right)}}-\frac{1}{d\left(k\right)g\left(k\right)-d\left(k\right)g\left(k\right)^{2}e^{sd\left(k\right)}}\right)ds\\
 & =\frac{2\tau d\left(k\right)g\left(k\right)+g^{2}\left(k\right)-1}{d^{2}\left(k\right)g\left(k\right)\left(g\left(k\right)e^{\tau d\left(k\right)}-1\right)}+\frac{\tau d\left(k\right)g\left(k\right)-g\left(k\right)-1}{d^{2}\left(k\right)g\left(k\right)}.
\end{align*}

\section{\label{sec:appendix_b}Solution for Poisson equation (\ref{eq:poisson_eq})}

By the spectral theory, solution $\phi$ for Poisson equation (\ref{eq:poisson_eq})
can be found as follows: 
\[
\phi\left(y,z\right)=z-y.
\]
Now, we briefly explain the way to find $\phi$. It is known that
the operator $\mathcal{L}_{s,0}$ has the eigenvalues $\lambda_{n}=-n$
$\left(n\in\mathbb{N}\right)$ and the family of eigenfunctions $\psi_{n}$
associated with eigenvalue $\lambda_{n}$ (i.e., $\mathcal{L}_{s,0}\psi_{n}=\lambda_{n}\psi_{n}$)
is 
\begin{align*}
\psi_{n}\left(y\right) & =\sqrt{\frac{n!\Gamma\left(\gamma\right)}{\Gamma\left(n+\gamma\right)}}L_{n}\left(\frac{y}{\nu^{2}}\right),
\end{align*}
where $\gamma=\frac{z}{\nu^{2}}$ and $L_{n}$ is an $n$-order Legendre
polynomial, that is, $L_{n}\left(w\right)=\frac{w^{1-\gamma}e^{w}}{n!}\frac{d^{n}}{dw^{n}}\left(w^{n+\gamma-1}e^{-w}\right)$.
It is also known that $\psi_{n}$ form a complete orthogonal basis
of the Hilbert space $L^{2}\left(\Phi\right)$. $\Phi$ is the invariant
distribution of $Y_{t}$, which is defined in the foregoing section.
Thus, $y-z\in L^{2}\left(\Phi\right)$ can be expressed as 
\begin{equation}
y-z=\sum_{n\ge0}c_{n}\psi_{n}\left(y\right),\label{eq:eigen_expansion}
\end{equation}
where $c_{n}=\left\langle \left(y-z\right)\psi_{n}\right\rangle $.
For any $n$, $c_{n}$ are explicitly calculated as follows: 
\begin{equation}
c_{n}=\begin{cases}
-\nu\sqrt{z} & {\rm if}\;n=1\\
0 & {\rm otherwise}
\end{cases}\label{eq:fourier_coefficient}
\end{equation}
We provide the induction process for the above formula \eqref{eq:fourier_coefficient}.
First, $c_{0}$ can be easily obtained as follows: 
\[
c_{0}=\left\langle \left(y-z\right)\psi_{0}\right\rangle =\left\langle y-z\right\rangle =0.
\]
In addition, $c_{1}$ is computed as follows: 
\begin{align*}
c_{1} & =\left\langle \left(y-z\right)\psi_{1}\right\rangle \\
 & =\sqrt{\frac{\Gamma\left(\gamma\right)}{\Gamma\left(1+\gamma\right)}}\int_{0}^{\infty}\left(y-z\right)\left(\gamma-\frac{y}{\nu^{2}}\right)\Phi\left(y\right)dy\\
 & =\sqrt{\frac{1}{\gamma}}\int_{0}^{\infty}\left(-\frac{1}{\nu^{2}}y^{2}+\left(\frac{z}{\nu^{2}}+\gamma\right)y-\gamma z\right)\Phi\left(y\right)dy\\
 & =-\nu\sqrt{z}.
\end{align*}
It is also proved that $c_{n}$ for $n\geq2$ are zero, because we
can show the following equation: 
\begin{align*}
 & \int_{0}^{\infty}y\left[\left(\frac{y}{\nu^{2}}\right)^{1-\gamma}e^{\frac{y}{\nu^{2}}}\frac{d^{n}}{dy^{n}}\left(\left(\frac{y}{\nu^{2}}\right)^{n+\gamma-1}e^{-\frac{y}{\nu^{2}}}\right)\right]\left[y^{\gamma-1}e^{-\left(y/\nu^{2}\right)}\right]dy\\
 & =\nu^{2\gamma-2}\int_{0}^{\infty}y\frac{d^{n}}{dy^{n}}\left(\left(\frac{y}{\nu^{2}}\right)^{n+\gamma-1}e^{-\frac{y}{\nu^{2}}}\right)dy\\
 & =-\nu^{2\gamma-2}\int_{0}^{\infty}\frac{d^{n-1}}{dy^{n-1}}\left(\left(\frac{y}{\nu^{2}}\right)^{n+\gamma-1}e^{-\frac{y}{\nu^{2}}}\right)dy\\
 & =-\nu^{2\gamma-2}\left[\frac{d^{n-2}}{dy^{n-2}}\left(\left(\frac{y}{\nu^{2}}\right)^{n+\gamma-1}e^{-\frac{y}{\nu^{2}}}\right)\right]_{0}^{\infty}\\
 & =\nu^{2\gamma-2}\left[\left(n+\gamma-1\right){\rm th\ degree\ polynomial\ without\ constant\ term}\times e^{-\frac{y}{\nu^{2}}}\right]_{0}^{\infty}\\
 & =0.
\end{align*}
Therefore, we eventually obtain the following $\phi$ by (\ref{eq:eigen_expansion})
and (\ref{eq:fourier_coefficient}): 
\begin{align*}
\phi\left(y,z\right) & =\frac{c_{1}\psi_{1}\left(y\right)}{\lambda_{1}}\\
 & =\frac{z}{\gamma}\left(\gamma-\frac{y}{\nu^{2}}\right)\\
 & =z-y.
\end{align*}

\begin{sidewaystable}

\section{\label{sec:errors_of_test}Errors in the empirical test}

\subfloat[In-sample errors for 2016--2017 options data]{\centering{}%
\begin{tabular}{ccccccccccccccccc}
\toprule 
 &  & \multicolumn{3}{c}{$\tau<0.05$} & \multicolumn{3}{c}{$0.05\leq\tau<0.1$} & \multicolumn{3}{c}{$0.1\leq\tau<0.2$} & \multicolumn{3}{c}{$\tau\geq0.2$} & \multicolumn{3}{c}{total}\tabularnewline
 &  &  &  &  &  &  &  &  &  &  &  &  &  &  &  & \tabularnewline
\midrule 
 &  & Heston & ours & o/h & Heston & ours & o/h & Heston & ours & o/h & Heston & ours & o/h & Heston & ours & o/h\tabularnewline
SPX & mean & 0.454 & 0.478 & 105\% & 0.509 & 0.419 & 82.2\% & 0.522 & 0.466 & 89.3\% & 0.494 & 0.557 & 113\% & 0.489 & 0.441 & 90.1\%\tabularnewline
 & std & 0.490 & 0.424 &  & 0.636 & 0.443 &  & 0.652 & 0.552 &  & 0.551 & 0.714 &  & 0.587 & 0.437 & \tabularnewline
 &  &  &  &  &  &  &  &  &  &  &  &  &  &  &  & \tabularnewline
\midrule
 &  & Heston & ours & o/h & Heston & ours & o/h & Heston & ours & o/h & Heston & ours & o/h & Heston & ours & o/h\tabularnewline
VIX & mean & 0.278 & 0.266 & 95.6\% & 0.402 & 0.344 & 85.5\% & 0.423 & 0.372 & 88.1\% & 0.437 & 0.391 & 89.5\% & 0.380 & 0.330 & 86.8\%\tabularnewline
 & std & 0.242 & 0.249 &  & 0.260 & 0.250 &  & 0.257 & 0.249 &  & 0.254 & 0.249 &  & 0.261 & 0.251 & \tabularnewline
 &  &  &  &  &  &  &  &  &  &  &  &  &  &  &  & \tabularnewline
\bottomrule
\end{tabular}}\\
 \subfloat[Out-of-sample errors for 2018 options data]{\centering{}%
\begin{tabular}{ccccccccccccccccc}
\toprule 
 &  & \multicolumn{3}{c}{$\tau<0.05$} & \multicolumn{3}{c}{$0.05\leq\tau<0.1$} & \multicolumn{3}{c}{$0.1\leq\tau<0.2$} & \multicolumn{3}{c}{$\tau\geq0.2$} & \multicolumn{3}{c}{total}\tabularnewline
 &  &  &  &  &  &  &  &  &  &  &  &  &  &  &  & \tabularnewline
\midrule 
 &  & Heston & ours & o/h & Heston & ours & o/h & Heston & ours & o/h & Heston & ours & o/h & Heston & ours & o/h\tabularnewline
SPX & mean & 0.500 & 0.472 & 94.5\% & 0.535 & 0.441 & 82.4\% & 0.526 & 0.452 & 85.8\% & 0.485 & 0.470 & 97.0\% & 0.521 & 0.453 & 87.0\%\tabularnewline
 & std & 0.538 & 0.373 &  & 0.813 & 0.369 &  & 0.871 & 0.412 &  & 0.720 & 0.475 &  & 0.717 & 0.371 & \tabularnewline
 &  &  &  &  &  &  &  &  &  &  &  &  &  &  &  & \tabularnewline
\midrule
 &  & Heston & ours & o/h & Heston & ours & o/h & Heston & ours & o/h & Heston & ours & o/h & Heston & ours & o/h\tabularnewline
VIX & mean & 0.321 & 0.259 & 80.6\% & 0.380 & 0.320 & 84.1\% & 0.393 & 0.338 & 86.1\% & 0.402 & 0.358 & 89.0\% & 0.369 & 0.308 & 83.5\%\tabularnewline
 & std & 0.294 & 0.247 &  & 0.301 & 0.232 &  & 0.282 & 0.235 &  & 0.267 & 0.240 &  & 0.300 & 0.236 & \tabularnewline
 &  &  &  &  &  &  &  &  &  &  &  &  &  &  &  & \tabularnewline
\bottomrule
\end{tabular}}

\caption{Errors in empirical test based on 2016--2018 options data. The results
are shown separately by time to maturity. The abbreviation ``o/h''
means the value obtained by dividing the error for our model by the
corresponding error for the Heston model.}
\end{sidewaystable}

\end{document}